\newcommand{\XOR}{\bigoplus}
\newcommand{\Find}{{FindMin}}
\newcommand{\anyfind}{{FindAny}}
\newcommand{\cP}{{\cal P}}
\newcommand{\Zp}{{\mathbb Z}_p}
\newcommand{\Sample}{{Sample}}
\newcommand{\HPTestOut}{{HP-TestOut}}
\newcommand{\drop}[1]{}
\newcommand{\mikkeldrop}[1]{}
\newtheorem {lemma} {Lemma}
\newtheorem {observation} {Observation}
\newcommand\eps\varepsilon
\newcommand\fct\rightarrow
\newcommand\ceil[1]{\lceil {#1}\rceil}
\newcommand\ol\overline
\newcommand{\mikkel}[1]{\textcolor{green}{{{\bf Mikkel:} #1\\}}}
\newcommand{\val}[1]{\textcolor{blue}{{{\bf Val:} #1\\}}}
\newcommand{\old}[1]{}
\newcommand{\TestOut}{{TestOut}}
\newtheorem{theorem}{Theorem}[section]
\begin{document}
\begin{titlepage}
\title{ Construction and impromptu repair of an MST in a distributed network with $o(m)$ communication}

\author{
Valerie King\thanks{University of Victoria; research supported by Simons Institute for the Theory of Computing, Berkeley; Institute for Advanced Study, Princeton,  ENS Paris, and NSERC.
 Research supported in parts by a
Google Faculty Research Award
 \hbox{E-mail}:~{\tt  val@uvic.ca}.}
\and
Shay Kutten\thanks{Faculty of IE\&M, Technion, Haifa 32000. Research supported in parts by the Israel Science Foundation, by the Ministry of Science, and by the Technion TASP center. \hbox{E-mail}:~{\tt kutten@ie.technion.ac.il}.}
 \and
Mikkel Thorup\thanks{University of Copenhagen, Denmark.  \hbox{E-mail}:~{\tt   mikkel2thorup@gmail.com}.}
}

\date{}
\maketitle

\setcounter{page}0
\thispagestyle{empty}

\begin{abstract}
In the CONGEST model, a communications network is an undirected graph whose $n$ nodes are processors and whose $m$ edges are the communications links between processors. At any given time step, a message of size $O(\log n)$ may be sent by each node to each of its neighbours. We show for the synchronous model:
If all nodes start in the same round, and each node knows its ID and the ID's of its neighbors, or in the case of MST, the distinct weights of its incident edges and knows $n$, then there are Monte Carlo algorithms which succeed w.h.p. to determine a minimum spanning forest (MST) and
a spanning forest (ST) using $O(n \log^2 n/\log\log n)$ messages for MST  and $O(n \log n )$ messages for ST, resp.  These results contradict the ``folk theorem" noted in Awerbuch, et.al., JACM 1990 that the distributed construction of a broadcast tree requires $\Omega(m)$ messages. This lower bound has been shown there and in other papers for
some CONGEST models; our protocol demonstrates the limits of these models.

A dynamic distributed network is one which undergoes online edge
insertions or deletions.  We also show how to repair an MST or ST in a
dynamic network with asynchronous communication. An edge deletion can
be processed in $O(n\log n /\log \log n)$ expected messages in the MST,
and
$O(n)$ expected messages for the ST problem, while an edge insertion
uses $O(n)$ messages in the worst case. We call this ``impromptu" updating as we assume
that between processing of edge updates there is no preprocessing or
storage of additional information.  Previous algorithms for this
problem that use an amortized $o(m)$ messages per update require
substantial preprocessing and additional local storage between
updates.

\end{abstract}

\end{titlepage}

\section{Introduction}

The problem of finding a minimum spanning forest (MST)  or computing a spanning forest (ST) in a communications network  is one of the most fundamental and heavily studied problems in distributed computing.
This problem is important for facilitating broadcast and coordination in a message-efficient manner. Given such a tree, messages may be broadcast from one node to all others or values from all nodes can be combined from the leaves up to one node in time proportional to the diameter of the tree, with a number of messages which is
proportional to the size of the tree, rather than all edges in the network, as when communication is by flooding.  For this reason, a tree is useful for tasks such as leader election, mutual exclusion, and reset (adaptation of any static algorithm to changes in the network topology). Below, we consider a network to be a graph with $n$ nodes and $m$ edges.

In 1983, Gallager, Humlet and Spira gave a now classic algorithm for finding a MST in a distributed asynchronous communications network with message complexity $O(m+n\log n)$ for a network with $n$ nodes and $m$ edges. The message complexity for this problem has not been improved until now, even for the easier context we consider here (a synchronous network with nodes initialized to start the algorithm at the same time). For the unweighted problem of ST,
a single node starting a flooding algorithm can can construct a broadcast tree $O(m)$ messages in time equal to the diameter of the network (see, e.g. \cite{segal}).

That $\Omega(m)$  messages are required for broadcast (and the construction of an ST)  is mentioned as ``folklore"
 by
Awerbuch, Goldreich, Peleg and Vainish (1990) \cite{vainish}. %In that paper, this statement is proved
They prove this lower bound
 in what is referred to as the ``standard" $KT_1$ model, where each node knows
  (its own identity) and the identity of its neighbors. The $\Omega(m)$ lower bound holds for randomized (Monte Carlo) comparison protocols, where the basic computation step is to compare two processors' identities, and for general algorithms
where the set of ID's is very large and grows independently with respect to message size, time and randomness. In 2013  Kutten, Pandurangan, Peleg, Robinson, and Trehan showed an $\Omega(m)$ lower bound for randomized general algorithms in the $KT_0$ model, where each node does not know the identities of its neighbors \cite{Kutten2013}.  All these lower bounds hold when the size of the network is known to all the nodes, the network is synchronous, and all the nodes start
%at the same time.
Simultaneously. Our MST and ST algorithms avoid both lower bounds by assuming $KT_1$ and an
exponential bound on the size of the identity space.

%%%%%%%%%%%%%%%%%%%%%%%%%%%%%%%%%%

\smallskip

Communication networks are inherently dynamic, in that a link may be either deleted or inserted over time.
This paper
also presents
% the maintenance of trees as the network  changes.
 algorithms to repair an MST or ST in an asynchronous network upon an edge insertion or deletion.
  These algorithms have the
   new property (for an efficient dynamic graph algorithm) of being ``impromptu", that is, they require no preprocessing or storage of auxiliary information except during the processing of the current updates. Between updates, a node knows only the names and weights of its incident edges and whether these edges are in the currently maintained MST or ST.
While there are previously known algorithms for updating MST and ST with $O(n)$ messages, these have significant memory requirements and require the communication costs to be amortized over a sequence of sufficiently long updates. For example, the 2008 algorithm of Awerbuch et. al. \cite{awerbuch-2008}  to maintain an MST uses $O(n)$ amortized messages per update (somewhat better then our second algorithm), but stores and stores  $\Theta(\Delta_v n \log n )$  bits at each node $v$, where $\Delta_v$ is the number of node $v$'s neighbors \footnote{
According to \cite{awerbuch-2008}, ``keeping
track of history enables significant improvements in the communication
complexity of dynamic networks protocols.'' Atop its abstract,
this was also stated in a more lyrical way:
  ``Those who cannot remember the past are condemned to repeat it
  (George Santayana).'' A part of the message of the current paper may
  be adding ``unless they flip coins...'' Here we show
that history can be replaced by random coin tosses.}.

\medskip

We first describe the model and then the
results.  A
communications network is a graph. We assume that every node knows a bound $n$ on the actual number of nodes.
An interesting case is when the known upper bound  on the size of the network is very tight (e.g the actual size multiplied by some small positive  integer constant).
In this case, all our asymptotic results are in terms of the actual network size. Hence for simplicity, we refer below to $n$ as the network size (rather than an upper bound).
The communication links are
undirected edges and each node has a unique ID $\in \{1,2,..,n^c\}$. In fact,
using the classic Karp-Rabin \cite{karp-rabin} fingerprinting, w.h.p., we
can easily map $n$ ID's in exponential ID space to distinct ID's in polynomial ID space.
%\val{changed fingerprinting language slightly}
  For the MST problem, each edge %is weighted where weights
has a weight
$\in \{1,2,...,u\}$ for any positive integer $u$.  Each node
knows its own ID, the weight of each incident edge, the ID of its
other endpoint, and $n$.
No other information
about the graph is known to any node at the start of any of our
algorithms.
 A message is a communication of $O(\log (n+u))$ bits which
is passed along a single edge.

A network is {\it properly marked} if every edge is marked by both or neither of its endpoints. {\it A tree $T$ is maintained by a network} if  the network is properly marked and $T$ is a maximal tree in the subgraph of marked edges.
For a node $x$, let
%Let
$T_x$ denote the tree maintained by the network and containing node $x$. We call an (unmarked) edge with exactly one endpoint in $T $ an edge
 {\it leaving}
  $T$ or  {\it outgoing}. A tree construction problem assumes that initially all edges are unmarked and every node
  knows to begin construction.
  At the end of the algorithm the network should maintain the MST (or ST).
We use the usual definitions of synchrony and asynchrony:
A {\it synchronized} network assumes a global clock, and messages are received in one time step.
An {\it asynchronous} network assumes that messages  are eventually received. Each node's action is triggered by the receiving of a message or other change to its state.
We say an event occurs ``w.h.p." (with high probability) if for
any constant $c$ (which is given as a parameter of the algorithm), the probability of the event is at least $1-n^{-c}$.
We show:

\begin{theorem} \label{t:build} There are algorithms to construct a minimum spanning tree (MST) and spanning tree (ST)
%shay chanced: which succeed
succeeding w.h.p. in a synchronous
networks of $n$ nodes using time and messages $O(n \log^2 n /\log \log n)$ for
MST and $O(n \log n)$ for
ST, and $O(\log (n+u))$ local memory per node. This assumes that each node is initialized to start the algorithm and  its only initial knowledge of the graph is its ID, its neighbors' ID's, the weight of each of its incident edge, and $n$.
  \end{theorem}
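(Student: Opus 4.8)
The plan is to realize a distributed version of Bor\r{u}vka's algorithm: at all times we maintain a partition of the nodes into \emph{components}, each equipped with a rooted spanning tree and a component ID, and we proceed in \emph{phases}, in each of which every component finds an outgoing edge and merges along it. For ST any outgoing edge will do; for MST each component must find its minimum-weight outgoing edge (MWOE), so that the selected edges are guaranteed to lie in the MST by the cut property. A phase in which every surviving component picks an outgoing edge reduces the number of components by a constant factor, so there are $O(\log n)$ phases, and the whole cost is $O(\log n)$ times the per-phase cost of locating outgoing edges. The entire difficulty is to do this \emph{without} $\Omega(m)$ messages, i.e.\ without having each node probe all of its incident edges.

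The crux is a subroutine $\anyfind$ that finds an outgoing edge of a component $C$ by XOR-cancellation, and this is exactly where $KT_1$ and the bounded (exponential) ID space are used. Because each node knows the ID of every neighbour, the two endpoints of an edge can compute the \emph{same} canonical name and fingerprint for it with no communication across the edge. If every node of $C$ contributes, by XOR, a fingerprint of its incident edges, then each edge internal to $C$ is contributed twice and cancels, while each edge leaving $C$ is contributed once and survives; aggregating these XORs up the spanning tree of $C$ therefore yields the XOR of the outgoing-edge fingerprints \emph{while never communicating along an internal edge}. To turn this parity information into an actual edge I would hash edges (via a once-initialized shared hash, $\InitHash$) into $O(\log n)$ geometrically thinning levels, so that at some level exactly one outgoing edge survives; the XOR at that level then decodes directly to that edge's name, and a short checksum fingerprint verifies w.h.p.\ that a single edge, rather than a cancelling collision of several, was recovered. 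The companion primitives $\TestOut$ and $\TestOutRange$ test whether $C$ has any outgoing edge (or one within a given weight range) by checking a single such fingerprint against zero; each test costs one aggregation over the tree of $C$, i.e.\ $O(|C|)$ messages and $O(\log n)$ bits of working memory, and is correct w.h.p.\ since an accidental zero fingerprint has probability $n^{-\Omega(c)}$.

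To assemble the bounds, note that the components partition the nodes, so the aggregations in one phase cost $\sum_C O(|C|)=O(n)$ messages per fingerprint evaluated, plus $O(n)$ more to re-root and re-ID the merged components, and these tree aggregations can be pipelined so the time stays proportional to the tree depths. For ST this yields $O(n\log n)$ overall. For MST I would implement $\Find$ (find the MWOE) from $\anyfind$ and $\TestOutRange$ by a randomized-pivot search: sample a uniform outgoing edge with $\anyfind$, use its weight as a pivot, restrict to the strictly lighter outgoing edges via $\TestOutRange$, and recurse; since choosing a uniform element and recursing below it behaves like the record statistic of a random permutation, only $O(\log n)$ rounds are needed w.h.p., and --- crucially --- this bound is independent of the weight range $u$, as we never binary-search over $[1,u]$. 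To reach $O(n\log^2 n/\log\log n)$ I would, in a single round of aggregation (at the same cost as finding one MWOE), recover a small \emph{batch} of each component's lightest outgoing edges and merge components in groups of size $\log n$ by a purely local computation on that batch, cutting the number of phases to $O(\log n/\log\log n)$ while each phase still costs $O(n\log n)$.

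The main obstacle, and where the real work lies, is the joint correctness and cost of $\Find$: guaranteeing that the pivot search returns the \emph{true} minimum outgoing edge (not merely a light one) in $O(\log n)$ rounds independent of $u$, while simultaneously guaranteeing that every $\anyfind$/$\TestOut$ decoding succeeds w.h.p.\ against adversarially arranged fingerprint collisions. This forces a disciplined, shared choice of hash functions reused consistently by both endpoints of every edge, a union bound over all the tests performed across all phases, and careful pipelining so the per-node memory stays $O(\log(n+u))$ and the time does not blow up when many components are active at once. The remaining bookkeeping --- assigning fresh component IDs after a merge, orienting the newly added tree edges, and detecting global termination --- is routine given these primitives and contributes only lower-order message and time costs.
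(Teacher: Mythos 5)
Your high-level skeleton matches the paper's: Bor\r{u}vka phases, XOR-cancellation over the component tree so that internal edges vanish and only cut edges survive, geometric thinning to isolate a single outgoing edge for $\anyfind$, and a verification step for the decoded edge. However, there are two genuine gaps. First, your route to the $\log\log n$ savings for MST is not the paper's and is not justified as stated. The paper keeps $O(\log n)$ phases and instead makes each $\Find$ cost only $O(\log n/\log\log n)$ broadcast-and-echoes by a $w$-ary search over the weight range: since each $\TestOut$ echo is a \emph{single bit}, $w=O(\log n)$ subrange tests are packed into one $O(\log n)$-bit echo word, so each narrowing step divides the range by $\log n$ at the cost of one broadcast-and-echo (verified by $\HPTestOut$ before committing). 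Your randomized-pivot quickselect gives $O(\log n)$ rounds per MWOE, hence $O(n\log^2 n)$ total, and your proposed fix --- recovering a ``batch'' of lightest outgoing edges and merging components in groups of $\log n$ ``by a purely local computation'' --- is underspecified and problematic: to contract $\log n$ components correctly under the cut property you need, for each component, its lightest edge to each of several \emph{distinct} neighboring components (simulating $\log\log n$ Bor\r{u}vka steps), not merely its $\log n$ lightest outgoing edges, which may all lead to the same neighbor. Without that, the claimed reduction to $O(\log n/\log\log n)$ phases does not follow.

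Second, two correctness points are glossed over. For ST, when components select \emph{arbitrary} outgoing edges (rather than MWOEs under distinct weights), the selected edges can form cycles of length greater than two among components; the paper devotes an explicit cycle-detection and edge-unmarking step (with its own probabilistic analysis in the appendix) to break these, and your proposal omits this entirely. Also, your claim that a single $O(\log n)$-bit XOR fingerprint tests emptiness of the cut ``w.h.p.'' is too strong: a parity/XOR sketch of this size detects a nonempty cut only with constant probability (the paper's odd hash achieves $1/8$), and the paper obtains the high-probability test $\HPTestOut$ by a different mechanism --- Schwartz--Zippel polynomial identity testing on the multisets of ordered edge endpoints --- precisely because naive amplification would cost $O(\log^2 n)$ bits per message.
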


\begin{theorem} \label{t:repair}
Upon deletion or increase in weight of an edge,  there are algorithms \anyfind~ and \Find~ to repair an ST and an MST, respectively, which find a replacement edge if there is any, in an asynchronous distributed network using expeced time and messages $O(n)$ for the ST, and
$O(n\log n/\log \log n)$ for the MST, and $O(\log (n+u))$ local memory per node.  Upon insertion or decrease in weight of an edge, a deterministic algorithm with $O(n)$ time and messages suffices to repair the tree. All repairs are impromptu, i.e, no preprocessing or extra storage is needed between updates.
This assumes each node knows its ID and the ID's of their neighbors and the weight of each incident edge. To achieve success with
probability $1-n^{-c}$, each node must know an upper bound on $n$ which is within a polynomial of $n$.
\end{theorem}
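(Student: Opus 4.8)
The plan is to reduce repair to finding one replacement edge across a cut, and then to build that from an oblivious convergecast primitive. When a tree edge $e=(x,y)$ is deleted or increased in weight, removing $e$ splits the maintained tree into two subtrees; I would root the subtree containing $x$ at $x$ and let $x$ coordinate. The difficulty is that a node incident to a non-tree edge cannot tell locally whether that edge crosses the cut, since it does not know on which side the other endpoint lies. I would resolve this with the XOR-cancellation trick: encode each edge $(u,v)$ by an $O(\log n)$-bit label, and convergecast up the rooted subtree the running $\XOR$ of the labels of all incident non-tree edges. Every non-crossing edge is contributed by both of its endpoints and cancels, so the value reaching $x$ equals the $\XOR$ of the labels of \emph{exactly} the crossing edges. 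If exactly one edge crosses, $x$ recovers its label directly; this is the \TestOut~primitive. Since a label fits in $O(1)$ messages and the subtree has $O(n)$ edges, one such convergecast costs $O(n)$ messages, and $x$ can certify a recovered candidate by re-injecting it and checking cancellation, or by a short walk along the tree path.

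The real work is handling an unknown, possibly large number $k$ of crossing edges, for which a single $\XOR$ is useless. I would sample edges geometrically using a shared hash: at level $j$ an edge survives with probability $2^{-j}$, so near $j^{*}\approx\log k$ a single edge survives with constant probability and \TestOut~recovers it. To locate $j^{*}$ cheaply I would exploit the $O(\log(n+u))$-bit message width: pack one parity bit per level into a single word, convergecast the $\XOR$ of these words, and read off the highest level that still shows a survivor; a constant number of independent seeds makes this estimate of $\log k$ reliable. This costs only $O(n)$ messages, after which sampling at the estimated level and repeating the recovery a constant expected number of times yields \anyfind~in $O(n)$ expected messages for the ST.

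For the MST I would turn \anyfind~into \Find~by searching the weight range for the \emph{minimum} crossing edge. Rather than a binary search, I would use a search tree of branching factor $b=\Theta(\log n)$: partition the current weight interval into $b$ subintervals, use one packed-parity convergecast to decide, for all $b$ subintervals at once, which contain crossing edges, recurse into the lowest nonempty subinterval, and finish with \TestOut~once a subinterval isolates a single edge. The search tree has depth $O(\log(n+u)/\log\log n)$, which is $O(\log n/\log\log n)$ when $u=n^{O(1)}$, and if each level can be made to cost $O(n)$ messages this gives the claimed $O(n\log n/\log\log n)$. The delicate point, which I expect to be the main obstacle, is exactly this balance: testing $b$ subintervals obliviously in $O(1)$ messages per node while keeping the per-level error small enough that the search succeeds across all $O(\log n/\log\log n)$ levels forces a careful trade-off between the branching factor, the width of the packed sketch, and the number of hash seeds. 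Together with the obliviousness of cut membership, driving this accounting down to the stated bound, rather than paying an extra $\log\log n$ or $\log n$ factor, is where the care is needed.

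Finally, insertion or weight-decrease of an edge $(u,v)$ is deterministic and easy by the cycle rule: route along the unique tree path between $u$ and $v$, find the maximum-weight edge on it, and swap it for $(u,v)$ when that weight exceeds $w(u,v)$ (for the ST, simply keep $(u,v)$ out, or join two forest trees); the path has $O(n)$ edges, so $O(n)$ messages and time suffice, with no randomness. Every step above uses only each node's stored incident edges, their weights, and their tree-membership bits, with all sketches computed on the fly, so the repairs are impromptu; and choosing labels and hashes of width $\Theta(c\log n)$ bounds all collision and isolation failures by $n^{-c}$, which is why a polynomial upper bound on $n$ suffices for success probability $1-n^{-c}$. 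Correctness throughout rests on the cancellation identity, and the asynchronous convergecasts are driven over the already-maintained subtree with an acknowledgment wave to detect completion.
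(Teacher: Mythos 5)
Your overall architecture is essentially the paper's: XOR-cancellation of doubly-counted internal edges to sketch the cut, geometric sampling with one packed parity bit per level to isolate a single crossing edge whose identity is then read off as an XOR of edge numbers (the paper's \anyfind, using a 2-independent hash into $[r]$ and achieving success probability $1/16$), a $\Theta(\log n)$-ary search over the weight range of depth $O(\log n/\log\log n)$ for the minimum crossing edge (the paper's \Find, where a single $1/8$-odd hash serves all $w$ subintervals and each returns one echoed bit), and the deterministic cycle rule for insertions. The one genuine gap is the piece you yourself flag as unresolved: controlling error across the $O(\log n/\log\log n)$ levels of the weight search, and, relatedly, deciding with high probability that the cut is \emph{empty} (the bridge case) without paying $O(\log n)$ independent parity repetitions. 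A constant number of hash seeds gives only constant reliability, so the plan as stated either sacrifices the $1-n^{-c}$ correctness guarantee or loses a $\log n$ factor in messages.

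The paper closes this with a second primitive, \HPTestOut: it tests whether the two multisets of ordered incident edges $E^\uparrow(T)$ and $E^\downarrow(T)$ are equal by evaluating $\prod_{e}\bigl(z-\mathrm{edge\_number}(e)\bigr)$ at a random $\alpha\in\Zp$ for a sufficiently large prime $p$ with $|p|=O(\log n)$ (Schwartz--Zippel / Blum--Kannan set-equality checking). This fits in one $O(\log n)$-bit broadcast-and-echo, is always correct when the cut is empty, and errs with probability at most $\epsilon(n)\le n^{-c-1}$ otherwise. With it, \anyfind{} certifies non-emptiness once up front (so the constant-probability recovery can be repeated an expected $O(1)$ times), and \Find{} verifies each narrowing step, i.e., checks w.h.p.\ that no subinterval below the chosen one contains a crossing edge, before recursing. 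Since the constant-probability parity test has one-sided error---it never reports a crossing edge where there is none---the only per-level failure mode is a false negative, which the verification catches, whereupon the level is simply repeated; this yields expected $O(1)$ broadcast-and-echoes per level, and a union bound over the $O(\log n)$ high-probability tests gives the stated $1-n^{-c}$ guarantee. You should add this (or an equivalent high-probability, one-shot emptiness test) to make the accounting you worry about go through.
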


Modified versions $\anyfind-c$ and  $\Find-c$ are also presented. Their
 worst case cost  matches the expected cost of $\anyfind$ and $\Find$.  When there is a replacement
for a deleted tree edge, w.h.p., they return either a correct replacement edge or $\emptyset$, and with constant probability, they return the former.

Our algorithms are based on the following new procedures which may be useful in other contexts.
Below, node $x$ initiates the procedure and receives the output, and $j,k \in \{1,..,u\}$:
\begin{itemize}[noitemsep]
\item
 \TestOut($x,j,k$): Returns true with constant probability if  there is an edge leaving $T_x$ with edge weight in the interval $[j, k]$; false otherwise. Always correct if true is returned.
 \item
 \HPTestOut($x,j,k$): The same as $\TestOut$ but w.h.p.
  \end{itemize}

 A basic  communication step in our network is a simple distributed routine  {\it broadcast-and-echo} \cite{GHS}.
 It is initiated by the broadcast of a message by a node $x$ which becomes the ``root" of a tree.
When a node $v$ receives a broadcast message from its neighbor $y$, it designates $y$ as its ``parent"  (for the sake of the current communication step) and sends a broadcast message to each of its other neighbors in $T$, its ``children''.
When a leaf node in $T$ receives a broadcast message, it sends a message (``echo") to its parent, possibly carrying some value. When a non-leaf message has received an echo message from every child, it sends an echo message to its parent, possibly aggregating its value with the values sent by its children. When the root has received echo messages from all its children, the broadcast-and-echo is done.
We show:

\begin{lemma} $\TestOut$ and $\HPTestOut$ can be performed with one broadcast-and-echo with message size $O(\log (n+u))$. The echo of $\TestOut$ requires only a message of only one bit.
 \end{lemma}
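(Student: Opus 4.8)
The plan is to realize both tests with a single pass of broadcast-and-echo whose aggregation operator is bitwise XOR, exploiting the fact that an XOR automatically cancels every internal edge (included once by each of its two endpoints in $T_x$, hence an even number of times) while preserving every outgoing edge (included once, by its unique endpoint in $T_x$). Concretely, node $x$ initiates the broadcast carrying the interval endpoints $j,k$ together with a short random seed $s$; each node $v\in T_x$ then assigns to every incident \emph{unmarked} edge $e$ with weight in $[j,k]$ a fingerprint $h_s(e)$ that depends only on the seed and on a symmetric edge identifier $\mathrm{id}(e)$ (an encoding of the unordered pair of endpoint IDs), XORs these fingerprints into a single value, and echoes the XOR of its own value with those of its children. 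Because $\mathrm{id}(e)$ and the weight of $e$ are known identically to both endpoints of $e$, and because the seed is broadcast to all of $T_x$, the two endpoints of an internal edge compute the same $h_s(e)$ and both include it, so it cancels; thus the value returned to the root equals $\XOR_{e} h_s(e)$ taken over exactly the edges leaving $T_x$ with weight in $[j,k]$.

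For the concrete fingerprint I would work in $\mathbb{F}_{2^b}$ with $b=\Theta(\log n)$ and set $h_\alpha(e)=\alpha^{\mathrm{id}(e)}$ for a random $\alpha\in\mathbb{F}_{2^b}$ carried in the seed. Writing $S$ for the set of outgoing edges in range, the root receives $F_S=\sum_{e\in S}\alpha^{\mathrm{id}(e)}$. Since distinct edges have distinct identifiers, the monomials are distinct and $F_S$ is, as a function of $\alpha$, a nonzero polynomial of degree at most $D=\max_e\mathrm{id}(e)=\mathrm{poly}(n)$ whenever $S\neq\emptyset$, while $F_\emptyset=0$ identically. For $\HPTestOut$ the root returns true iff $F_S\neq 0$: this is never wrong when it says true (since $F_\emptyset=0$), and it errs only when a nonzero polynomial vanishes at the random $\alpha$, which happens with probability at most $D/2^b=n^{-c}$ once $b$ is a large enough multiple of $\log n$. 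The echo carries the $b=O(\log n)$ bits of $F_S$, and the broadcast carries $j,k$ (each $O(\log u)$ bits) and the seed ($O(\log n)$ bits), so all messages have size $O(\log(n+u))$.

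The one-bit echo of $\TestOut$ is the step I expect to require the most care, because a naive single-bit hash cannot be made uniform over all subset-XORs without either linear independence (forcing a seed as long as $|E|$) or the polynomial trick above. I would resolve it by additionally drawing a random $\mathbb{F}_2$-linear functional $\ell$ (a second field element $\beta$ in the seed, $\ell(y)=\langle\beta,y\rangle$) and letting each node XOR the \emph{single bits} $\ell(\alpha^{\mathrm{id}(e)})$; by $\mathbb{F}_2$-linearity the root then obtains exactly $\ell(F_S)$, a one-bit quantity. When $S=\emptyset$ we have $\ell(F_S)=\ell(0)=0$, so true is never returned in error; when $S\neq\emptyset$, choosing $2^b\ge 2D$ gives $\Pr[F_S\neq 0]\ge\tfrac12$, and conditioned on $F_S\neq 0$ the value $\ell(F_S)=\langle\beta,F_S\rangle$ is a uniform bit, so true is returned with probability at least $\tfrac14$, a constant as required. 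The remaining points—that both endpoints of an internal edge apply the same weight filter and hence still cancel, and that the broadcast-and-echo traverses each tree edge exactly once—are routine, yielding one broadcast-and-echo with messages of size $O(\log(n+u))$ and a one-bit echo for $\TestOut$.
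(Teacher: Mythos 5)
Your proposal is correct, but it takes a genuinely different route from the paper, which treats the two tests by two unrelated mechanisms. For $\TestOut$ the paper broadcasts the multiply-shift ``odd'' hash $h(e)=1$ iff $(ae\bmod 2^w)\le t$ and aggregates the parity of $\sum_{e\in E(v)}h(e)$ up the tree, citing the $1/8$-oddness of this distinguisher from the Thorup sampling note; for $\HPTestOut$ it reduces to a set-equality test $E^\uparrow(T)\stackrel{?}{=}E^\downarrow(T)$ and aggregates the two \emph{products} $\prod_e(\alpha-\mathrm{id}(e))$ over $\Zp$ \`a la Blum--Kannan/Schwartz--Zippel, which may require a preliminary broadcast-and-echo to pick a suitable prime $p$. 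You instead derive both tests from the single fingerprint $\alpha^{\mathrm{id}(e)}\in\mathbb{F}_{2^b}$ with XOR aggregation: internal edges cancel because both endpoints contribute the same field element, the root sees the nonzero polynomial $F_S(\alpha)=\sum_{e\in S}\alpha^{\mathrm{id}(e)}$ exactly when the cut is nonempty, and the one-bit variant is obtained by composing with a random $\mathbb{F}_2$-linear functional $\beta$, giving a self-contained proof that the resulting one-bit hash is $(1/4)$-odd. What your version buys is uniformity (one mechanism, one analysis for both tests), no need for a prime or for the $E^\uparrow/E^\downarrow$ double bookkeeping, and hence no step-0 broadcast; what the paper's version buys is a much cheaper local computation for $\TestOut$ (one multiplication and a comparison per edge versus an $O(\log n)$-multiplication exponentiation to evaluate $\alpha^{\mathrm{id}(e)}$) and a slightly better constant. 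The only points you should make explicit are that the unordered-pair encoding gives distinct exponents $\mathrm{id}(e)$ bounded by a known polynomial in $n$ (so $b=\Theta(\log n)$ can be fixed in advance from the known bound on $n$), but both follow immediately from the paper's ID-space assumptions, so there is no gap.
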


\noindent
{\bf Other previous work}

\noindent
{\it Similar techniques:} $\TestOut$ uses the principle that each edge with two endpoints in a tree contributes 0 to the parity of the sum of the degrees of the
nodes in a tree, while each edge which leaves a tree contributes 1. Therefore, in a randomly sampled graph, there is a 1/2 chance that if there are one or more edges leaving a tree, the parity of the sum of the degrees
 is odd. This observation is used in a
  paper on graph sketching  \cite{sketching} and
  a paper on sequential dynamic graph connectivity data structures \cite{kapron-2013}.
It is not clear how to adapt the techniques of \cite{sketching} to the distributed setting. Those of \cite{kapron-2013} were adapted to a distributed version \cite{mountjoy}, however, it was not impromptu (required keeping supplemental storage between updates) and did not address an MST (and was much more complicated that the repair algorithm presented here).

\medskip

\noindent
{\it MST and ST construction:} The complexity of the first distributed MST construction algorithm was not analyzed \cite{first-mst}.
Following the seminal paper of \cite{GHS} mentioned above, Awerbuch improved the time complexity to $O(n)$ \cite{AwerbuchMST}, retaining the same message complexity of $O(m+ n \log n)$.
Distributed algorithms that are faster (when the diameter of the network or the diameter of the MST are smaller) do exist
\cite{kdom,kutten-peleg-garay, elkin-mst}.
However, their message and memory space
complexities are much higher.
 
\medskip
\noindent
{\it Simultaneous edge changes:}
As opposed to the previous $o(m)$ (but non impromptu) repair algorithms \cite{awerbuch-2008,porat},
  ours has not been extended to deal with multiple updates at a time, though we believe it can be. \footnote{The idea behind such an extension would be, essentially, to use the algorithm of Awerbuch et. al  of 2008, but replace their method of finding replacement edges with the method used here.}

\smallskip
\noindent
{\bf Definitions and Organization}\\
\smallskip
\noindent
{\it Definitions:} An edge $\{u,v\}$'s {\it edge number} is
the concatenation of the unique ID's of the edge's endpoints, smallest first.
We
create unique weights
(as in \cite{GHS}) by
concatenating the weight to the front of its edge number. For any tree, {\it maxID(T)}, {\it maxEdgeNum(T)}, and  {\it maxWt(T)} denote the maximum ID of any node in $T$, the maximum edge number, and maximum weight edge, resp. of any node in $T$. $T$ is omitted where it is understood from context. Let $[j,k]$ denote the set of integers $\{j,j+1,...,,k\}$ and $\lg n$ denote $\log_2 n$.

\smallskip

\noindent
{\it Organization:}  The functions $\TestOut$ and $\HPTestOut$ are described in Section \ref{s:oddhash}.  Section \ref{s:findmin} describes $\Find$, an algorithm for dynamic MST and an algorithm for constructing an MST \ref{ss:buildMST}. Section \ref{s:findany} describes $\anyfind$ and reduces the complexity for construction and repair for ST. The Appendix contains an extension to the case where edge weight may be superpolynomial in $n$,  and some deferred  proof.

\section{\TestOut} \label{s:oddhash}
In this section, we describe $\TestOut$ and $\HPTestOut$.
\subsection{Random odd hash functions and TestOut}  \label{sub:testout}
As a method to sample edges, we use the concept of an {\em odd\/} hash
functions:
We say that a random hash function $h:[1,m]\fct
\{0,1\}$ is {\em $\eps$-odd}, if for any given non-empty set
$S\subseteq [1,m]$, there are an odd number of elements in $S$ which
hash to 1 with probability $\eps$, that is,
\begin{equation}\label{eq:non-zero}
\Pr_h\left[\sum_{x\in S} h(x)=1\mod 2 \right]\geq\eps.
\end{equation}

An odd hash function is
a type of ``distinguisher" described in
 \cite{thorup-sample}; we use their construction here\footnote{\cite{thorup-sample} was
originally inspired by the developments in the current paper.}. Let $m\leq 2^w$. We pick a
uniform odd multiplier  $a$ from $[1,2^w]$ and a uniform threshold $t\in [1,2^w]$.
From these two components, we define $h:[1,2^w]\fct \{0,1\}$ as
%\[b(x)=[h(x)<t].\]
\begin{eqnarray*}
h(x)& = & 1 \hbox{ if  }(ax\bmod 2^w)\leq t\\
       & = & 0  \hbox{ otherwise.}
 \end{eqnarray*}
The above is particularly efficient if $w\in \{8,32,64\}$ in a programming language like C, for there the mod-operation
comes for free as part of an integer multiplication which automatically
discards overflow beyond the $w$ bits.
From  \cite{thorup-sample} we see
that $h$ is an $(1/8)$-odd hash function.

Let $h:[1,maxEdgeNum] \rightarrow \{0,1\}$ be an odd hash function.
We show how to compute $TestOut$.
Let $E(v)$ denote the edge numbers of edges incident to node $v$.  Let $Cut(T, V\setminus T)$ denote the set of edges with exactly one endpoint in $T$.
To test with constant probability whether there exists any edge leaving a tree $T$,
each  node $v \in T_x$ with $E(v) \neq \emptyset$ computes
  $$ \sum_{e\in E(v)} h(e) \mod 2$$
  \vspace{2mm}
locally.  If $E(v)=\emptyset$, then 0 is returned.
These values are aggregated over the nodes in $T$  to compute
$$\sum_{v\in V(T)} \sum_{e\in E(v)} h(e) \mod 2=
\sum_{e \in Cut(T, V\setminus T)} h(e) \mod 2$$

\medskip

$\TestOut(x)$ can be done with one broadcast of $h$ from node $x$ and one 1-bit echo. First
$x$  broadcasts $h$ in one message. The leaves return the parity of their sum to their parent;
the parent passes to its own parent the sum mod 2 of its children and of its own sums.

\medskip

 $\TestOut(x,j, k )$ checks if there is any edge
leaving $T_x$ whose weight is in a given interval $[j,k]$.  To do so, in each local computation at node $v$, the definitions above for
$\TestOut(x)$ are changed so that

 \[\sum_{e\in E(v)} h(e) \mod 2 \;\;\;\;\;\;\;\;\;\;\;\;\;\;\;\; {\rm is \; replaced \; by}  \;\;\;\;\;\;\;\;\;\;\;\;\;\;\;\; \sum_{e\in E(v) \wedge weight(e)\in [j,k])} h(e) \mod 2\]

\subsection{High probability \TestOut} \label{s:testout}

 $\TestOut$ achieves a constant probability of correctness if the set is non-empty and
is always correct if it is empty.  The probability can be amplified to high probability by $O(\log n)$ independent parallel repetitions.
However this would require $O(\log^2 n)$ bits. Alternatively, deterministic amplification methods can bring down the randomness to $O(\log n)$. However, this is not simple and would require each node to construct a portion of an averaging sampler.

Instead, we take advantage of the type of set we are looking at and introduce a high probability version of $\TestOut.$
W.h.p., $\HPTestOut(x)$ outputs 1 if the tree $T_x$ has any leaving edge. If there is no such edge, it always returns 0.

\medskip

\noindent
For a vertex $u$, let $E^\uparrow (u)=\{(u,v)\in E\}$ and
$E^\downarrow (u)=\{(v,u) \in E\}$. For the tree $T$, $E^\uparrow
(T)=\bigcup_{u\in T} E^\uparrow(u)$ and $E^\downarrow
(T)=\bigcup_{u\in T} E^\downarrow(u)$.
\begin{observation}\label{obs:out-test}
There is an edge $\{u,v\}\in E$ with only one endpoint in $T$ if and only
if $E^\uparrow (T)\neq E^\downarrow (T)$.
\end{observation}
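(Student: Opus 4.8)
The plan is to understand the combinatorial meaning of the sets $E^\uparrow(T)$ and $E^\downarrow(T)$ and show that their failure to coincide exactly detects a leaving edge. First I would unpack the definitions: for a directed view of each undirected edge, $E^\uparrow(u)$ collects the edges written with $u$ as the first coordinate and $E^\downarrow(u)$ those with $u$ as the second coordinate. Aggregating over the tree, $E^\uparrow(T)=\bigcup_{u\in T}E^\uparrow(u)$ and $E^\downarrow(T)=\bigcup_{u\in T}E^\downarrow(u)$. The key is to classify each undirected edge $\{u,v\}$ according to how many of its endpoints lie in $T$, and to track which of the two aggregate sets contains the corresponding ordered pair. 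An internal edge (both endpoints in $T$) should contribute the same ordered pair to both sides, so it cancels; a leaving edge should land in exactly one side.

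Concretely, I would argue by cases on an edge $e=\{u,v\}$, writing its ordered form as $(u,v)$ with, say, a fixed orientation convention consistent with the notation $(u,v)\in E$. \textbf{Case both endpoints in $T$:} the ordered pair appears in $E^\uparrow(T)$ via its first endpoint and in $E^\downarrow(T)$ via its second endpoint, so it lies in both sets and does not witness a difference. \textbf{Case exactly one endpoint in $T$:} exactly one of the two aggregations picks it up, so the ordered pair belongs to exactly one of $E^\uparrow(T)$, $E^\downarrow(T)$ — this is precisely a member of the symmetric difference. \textbf{Case neither endpoint in $T$:} the pair appears in neither aggregate set. This case analysis gives both directions of the biconditional simultaneously: if some edge leaves $T$, the second case produces an element in exactly one set, forcing $E^\uparrow(T)\neq E^\downarrow(T)$; conversely, if no edge leaves $T$, every edge falls under the first or third case, so every ordered pair is either in both sets or in neither, whence $E^\uparrow(T)=E^\downarrow(T)$.

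I expect the main obstacle to be getting the orientation bookkeeping exactly right, since the observation is stated over ordered pairs and the underlying graph is undirected. The subtlety is ensuring that each undirected edge contributes a single well-defined ordered pair to the universe of pairs being compared, so that the \emph{same} element is the one appearing (or not) in $E^\uparrow(T)$ versus $E^\downarrow(T)$; otherwise the set equality could be confounded by an edge contributing its $(u,v)$ form to one side and its $(v,u)$ form to the other. I would handle this by fixing the convention that $E^\uparrow(u)$ and $E^\downarrow(u)$ refer to the two endpoint-roles of the \emph{same} ordered representation of each edge, so that an internal edge contributes an identical pair to both aggregates and hence cancels in the symmetric difference, while a leaving edge, having only one endpoint inside, can only contribute through one role. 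Once the orientation convention is pinned down, the case analysis above is entirely routine and yields the biconditional directly.
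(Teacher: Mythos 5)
Your proof is correct and matches the argument the paper leaves implicit (the paper states this as an unproved observation): each undirected edge has one canonical ordered representation $(u,v)$ given by its edge number, so an internal edge lands in both $E^\uparrow(T)$ and $E^\downarrow(T)$, an external edge in neither, and a leaving edge in exactly one, which gives the biconditional. Your attention to fixing the orientation convention is exactly the right point to flag, and the paper's edge-numbering (IDs concatenated smallest first) supplies that convention.
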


Thus, to implement $\HPTestOut$, we need only test if $E^\uparrow
(T)\neq E^\downarrow (T)$. To test set equality efficiently, we use a method
from \cite{blum-kannen} based on the Schwartz-Zippel \cite{schwartz} polynomial
identity testing.
Let $B$ be the number of edges incident to nodes in $T$. To achieve probability of error $\epsilon(n)$, it suffices to use any prime $p >\max\{maxEdgeNum(T), B/\epsilon(n)\}$, with $|p| \leq w$, the maximum message size. We note that if $w$ is  sufficiently large and known to all nodes, we may take $p$ to be the maximum prime $p$  with $|p|<w$ or have some other predetermined value for $p$.
For $p$ and an edge set $D$, we define
a polynomial over $\Zp$ by
\[\:\:\:\:\:\:\:\:\:\:\:\:\:\:\:\:\:\:\:\:\:\:\:\:\:\:\:\:\:\:\:\:\:\:\:\:\:\:\:\:\:\:\:\:\cP(D)(z)=\prod_{e\in D}  (z-edge\_number(e)) \mod p.\]

\begin{equation}\label{eq:test}
{\rm From \cite{blum-kannen}:} \;\;\;\;\;\;\;\;\;\;\;\; \Pr_{\alpha\in \Zp} [\cP(E^\uparrow (T))(\alpha)=\cP(E^\downarrow (T))(\alpha) ] < \epsilon(n).
\end{equation}

\mikkeldrop{
Thus, to implement $\HPTestOut$, we need only test if $E^\uparrow
(T)\neq E^\downarrow (T)$. We do this To do this set efficiently, we use
Schwartz-Zippel \cite{schwartz} polynomial identity testing.
\mikkel{This is an exercise in Randomized Algorithms, which claims it
  comes from: [Manuel Blum, Sampath Kannan: Designing Programs That
  Check Their Work. STOC 1989: 86--97] I have no library access, so cannot
check, but perhaps we should just state the procedure and make a reference
for the proof that it works, thus skipping Lemma 2.}
\val{I"m fine with this}

Let $B$ be the number of edges incident to nodes in $T$. To achieve probability of error $\epsilon(n)$, it suffices to use any prime $p >\max\{maxEdgeNum(T), B/\epsilon(n)\}$, with $|p| \leq w$, the maximum message size. We note that if $w$ is  sufficiently large and known to all nodes, we may take $p$ to be the maximum prime $p$  with $|p|<w$ or have some other predetermined value for $p$.

For $p$ and an edge set $D$, we define
a polynomial over $\Zp$ by
\[\cP(D)(z)=\prod_{e\in D}  (z-edge\_number(e)) \mod p.\]

\mikkel{Not sure we need the lemma and proof if we have the reference.}
\val{sure--just the alg. and something relating it to the tree problem. We'd have to phrase it as a general test of equality of two sets and then apply it to the graph problem as opposed to the way we've stated it.}

\begin{lemma}[]
For any prime  $p >\max\{maxEdgeNum, B/\epsilon(n)\}$, let $\alpha$ be an element of $\Zp$ chosen uniformly at random. Then
if $E^\uparrow (T)\neq E^\downarrow (T)$
\begin{equation}\label{eq:test}
\Pr_{\alpha\in \Zp} [\cP(E^\uparrow (T))(\alpha)=\cP(E^\downarrow (T))(\alpha) ] < \epsilon(n).
\end{equation}
\end{lemma}

\begin{proof}Consider the polynomial $\cP(E^\uparrow
(T))(z) - \cP(E^\downarrow (T))(z)$ in $\Zp$. If $E^\uparrow (T)\neq
E^\downarrow (T)$ then we first claim that the polynomial is not identically 0 in $\Zp$. If one of the sets is empty and the other is not, this is clearly true. Assume both are non-empty.
Let $r$ be

 the \mikkel{probably it should be ``an'' rather than ``the''. }
 \val{yup}
 edge
 number for an edge which appears in exactly one of the sets.
 W.l.o.g., let us assume this is $E^\uparrow(T)$. Then $\cP(E^\uparrow
 (T))(r) - \cP(E^\downarrow (T))(r)=0-\cP(E^\downarrow(T))(r)$.  Since
 all edge numbers are less than $p$, every factor of
 $\cP(E^\downarrow(T))(r)$ is of the form $r-r' < maxEdgeNum$\mikkel{shouldn't
it be $r-r'\bmod p$} and thus
 $r\neq r' \mod p$, $cP(E^\downarrow(T))(r) \neq 0$ in $\Zp$ and the
 claim follows.

It remains to show that with probability less than $\epsilon(n)$, $\alpha$ is not a root of the polynomial.
This polynomial has degree between 0 and $B$, and hence no more than $B$ roots.
\end{proof}

Then with a uniformly random $\alpha \in \Zp$,
\begin{equation}\label{eq:test}
\Pr_{\alpha\in \Zp} [\cP(E^\uparrow (T))(\alpha)-\cP(E^\downarrow (T))(\alpha)=0]\leq B/p \leq \epsilon(n)
\end{equation}
\mikkel{Shouldn't (\ref{eq:test}) above be part of the proof of the lemma?}\val{yup}
}

\noindent
$\HPTestOut(x)$:\{assumes $x$ knows $\epsilon(n)$\}\\

\noindent
0)  If $p$ is not known by all nodes, then $x$ initiates a $Broadcast-and-echo$ to find $maxEdgeNum$, $B$ (by summing up the degrees of nodes in $T$), and using these, determines $p.$ \\ \\
1) $x$ initiates a $Broadcast-and-echo$ in which a randomly selected $\alpha \in \Zp$ (and $p$ if necessary) is passed to all nodes in the tree in the broadcast phase.  Each node $y$
locally  computes $Local^\uparrow (y)= \cP(E^\uparrow (y))(\alpha)$ and $ Local^\downarrow (y)=\cP(E^\downarrow (y))(\alpha)$. Upon receiving
$\cP(E^\uparrow (T_z))(\alpha)$ and $ \cP(E^\downarrow (T_z))(\alpha)$ from each of its children $z$,
each node computes and sends to its parent $$ \;\;\;\;\;\;\;\;\;\;\;\;\;\;\;\;\;\;\;\; \cP(E^\uparrow (T(y))(\alpha) =  Local^\uparrow(y) *\prod_{z~child ~of~ y} \cP(E^\uparrow (T_{z}))(\alpha)$$
 %and
  $${\rm and} \;\;\;\;\;\;\;\;\;\;\;\;\;\;\;\; \cP(E^\downarrow (T(y))(\alpha)=  Local^\downarrow(y) *\prod_{z~child ~of~ y} \cP(E^\downarrow (T_{z}))(\alpha) $$
3) $x$ determines there is an edge leaving $T$ iff  $\cP(E^\uparrow (T))(\alpha)\neq \cP(E^\downarrow (T))(\alpha)$.
%shay: tell mikkel I moved the "and" inside the lind to save a line.

\medskip

\noindent
{\it Analysis:}  As all computations are over $\Zp$, the number of messages sent is $\leq 4|T|$  with each containing $|p|=O(\log (maxEdgeNum+B))=O(\log n)$ bits.

\medskip

\noindent
 $\HPTestOut(x,j, k )$ checks if there is any edge
leaving $T_x$ whose weight is in a given interval $[j,k]$.  To do so, in each local computation at node $v$, the definitions above for
$\HPTestOut(x)$ are changed so that  $E^\uparrow (u)=\{(u,v)\in E\}$ and $E^\downarrow (u)=\{(v,u)\in E\}$ are replaced by  $E^\uparrow (u)=\{(u,v)\in E \wedge weight (u,v) \in [j,k] \}$ and $ E^\downarrow (u)=\{(v,u)\in E \wedge weight(v,u) \in [j,k]\}$.

\section{MST Build and Repair}\label{sec:Find} \label{s:findmin} \label{sec:findmin}

We present a simple method to find the lightest leaving edge using a $w$-wise search on the edge weights. This yields a method using $O(\log n/\log \log n)$
broadcast-and-echoes
with $w=O(\log n )$ bit messages when  the
weight of
every edge is polynomial.
 In the appendix, we give a more complicated method which uses $O(\log n/\log \log n)$ for superpolynomial edge weights of size $u$ which assumes wordsize $O(\log (n+u))$.

 \subsection{Integer edge weights  of polynomial size}
Since $\TestOut$ uses a single bit ``echo",  a single $broadcast-and-echo$ can test $w=O(\log n)$ subranges concurrently, as the same hash function can be used for each of the parallel $\TestOut$'s, while the single bit responses for each subrange
$\TestOut$'s are returned concurrently in one word.  The smallest subrange testing positive becomes the next range of edge weights to be tested. Before narrowing the range, the result is verified  w.h.p. using $ \HPTestOut$.

Below we describe $\Find$ and $\Find$-C. $\Find$-C is like $\Find$ except that the number of repetitions of the loop in the algorithm is limited to double the expected number, $O(\log n /\log\log n)$, rather than $O(\log n)$ in the worst case.  Let $q$ be the probability that $\TestOut$ succeeds.  We assume for any constant $c$, $x$ knows  a polynomial bound on the network size $n$ in order to set an error parameter for $\HPTestOut$,  $\epsilon(n) \leq n^{-c-1}$ such that $\epsilon(n)^{-1}$ is polynomial in $n$ and a bound for $Count$ for $\Find$ which exceeds $(c/q) \lg n$ and is $O(\log n)$.

\medskip

\noindent
{\bf $\Find(x)$ [$\Find-C$]} \{{\it finds minimum cost edge in $(T_x, V\setminus T_x)$}\}
\vspace{-2mm}
\begin{enumerate}[noitemsep]

%\item $\InitHash(x)$.

\item

$Count \leftarrow 0.$
\item
 $x$ determines $maxWt(T_x)$ and $maxEdgeNum(T_x)$ through one broadcast-and-echo and computes $\epsilon(n)$.

\item
 $x$ sets $j \leftarrow 1$; $k\leftarrow maxWt(T_x)$

\item $x$ broadcasts an odd hash function $f:[1,maxEdgeNum(T_x)] \rightarrow \{0,1\}$ and also $j$ and $k$.\label{line:newhash}

\item In parallel for $i=0,1, 2,..., w -1$:\\
 set $j_i=j+i\lceil{(k-j)/w}\rceil$ and $k_i=j+(i+1)\lceil{k-j)/w}\rceil -1$,\\
return word in which $i^{th}$ bit is the ``echo" of $\TestOut(x,j_i, k_i)$

\item Upon receiving the echo, $x$ determines the index ${min}=\min\{i ~|~\TestOut(x,j_i,k_i) =1)$ and \\
 initiates  $TestLow =\HPTestOut(x,0,j_{min}-1)$ and $TestInterval=\HPTestOut(x,j_{min},k_{min})$.

\item Upon receiving results,
\begin{enumerate}[noitemsep]
\item
if $TestLow=0$ and $TestHigh=1$ \\
and if $j_{min} < k_{min}$ then $x$ sets $j=j_{min}$ and $k=k_{min}+1$; else  if $j_{min}=k_{min}$  $x$ broadcasts ``stop" and returns $j_{min}$.
\item
else if both return 0, $x$ broadcasts ``stop" and returns $\emptyset$.
\end{enumerate}

\item
For $\Find$  [resp., $\Find-C$]: If $Count < (c/q)\lg n+ (c/q) (\lg maxWt(T_x)/\lg w$, [resp., $Count < (2c/q) \lg maxWt(T_x)/\lg w $], increment Count and repeat from Step \ref{line:newhash}. Else return $\emptyset$.

 \end{enumerate}

  \vspace{1mm}
 \noindent
  {\bf Proof of correctness}\\
 \vspace{-5mm}

 \begin{lemma} \label{l:find}
Let $c$ be any constant s.t. $c \geq 1$.  With probability $1-n^{-c}$, using asynchronous communication, $\Find(x)$ returns the lightest edge leaving a tree $T_x$ in expected time and messages $O(|T_x| \log n/\log \log n)$ (and worst case $O(\log n)$ time and messages.
With probability $2/3 -1/n^c$ $\Find-C(x)$ returns the lightest edge and with probability $1-n^{-c}$  it returns the lightest edge or $\emptyset$, using worst case $O(|T_x| \log n/\log \log n)$ messages and time. If there is no edge leaving the tree, both procedures always return $\emptyset$. This assumes $x$ knows an upper bound on the size $n$ of the network which is polynomial in $n$.

\end{lemma}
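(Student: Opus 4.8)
The plan is to analyze the algorithm $\Find(x)$ as a $w$-ary search on the weight range $[1, maxWt(T_x)]$, tracking both the probability of failure and the message/time cost, and then separately treat the worst-case-cost variant $\Find\text{-}C$.

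First I would establish correctness conditioned on all $\HPTestOut$ calls behaving correctly. The key structural invariant is that after each successful narrowing step, the current interval $[j,k]$ contains the lightest leaving edge of $T_x$. In Step 5 the range is partitioned into $w$ consecutive subranges, and by Observation~\ref{obs:out-test} (via the $\TestOut$ of Lemma~1) the smallest-index subrange that echoes $1$ is the one containing the lightest leaving edge, provided $\TestOut$ did not produce a false negative. Each iteration that succeeds shrinks the range width by a factor of roughly $w$, so after $O(\log maxWt(T_x)/\log w) = O(\log n/\log\log n)$ \emph{successful} iterations the interval collapses to a single weight $j_{min}=k_{min}$, at which point the algorithm returns that weight as the lightest leaving edge. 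The verification in Step~6 using $\HPTestOut$ on the low part $[0,j_{min}-1]$ and on the chosen interval guards against a false positive from $\TestOut$: if $\TestOut$ lied about a subrange, the high-probability test $TestLow=0, TestHigh=1$ will w.h.p.\ catch the discrepancy and the loop simply repeats from Step~\ref{line:newhash} with a fresh hash function, so no incorrect edge is ever returned. The case of no leaving edge is handled because every $\TestOut$ and $\HPTestOut$ returns $0$ deterministically on an empty cut (``always correct if false''/true-only guarantee), so both procedures return $\emptyset$.

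Next I would bound the number of iterations and the failure probability. Each iteration makes progress (narrows the range by a factor $w$) exactly when the relevant $\TestOut$ succeeds, which happens with probability at least the constant $q$. Thus the number of iterations is dominated by a sum of geometric random variables: we need $O(\log maxWt(T_x)/\log w)$ successes, each arriving after $O(1)$ expected trials, giving expected iteration count $O(\log n/\log\log n)$, hence expected messages and time $O(|T_x|\log n/\log\log n)$ since each broadcast-and-echo costs $O(|T_x|)$ messages by the analysis of $\HPTestOut$. The worst-case bound of $O(\log n)$ iterations follows from the $Count$ cutoff $(c/q)\lg n + (c/q)(\lg maxWt(T_x)/\lg w)$. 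For the failure probability I would take a union bound over the (at most $O(\log n)$) $\HPTestOut$ invocations, each set to error $\epsilon(n)\le n^{-c-1}$, together with the probability that $\Find$ exhausts its $Count$ budget before accumulating enough $\TestOut$ successes; by the choice of $Count$ and a standard Chernoff/tail estimate on the number of successes in $\Theta(\log n)$ Bernoulli$(q)$ trials, this exhaustion probability is at most $n^{-c}$. Combining, $\Find(x)$ returns the correct lightest leaving edge with probability $1-n^{-c}$.

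For $\Find\text{-}C$ I would argue the two stated guarantees separately. Because its $Count$ budget is truncated to $(2c/q)\lg maxWt(T_x)/\lg w = O(\log n/\log\log n)$ iterations, the worst-case cost is $O(|T_x|\log n/\log\log n)$ unconditionally. The budget is twice the expected number of iterations needed, so by Markov's inequality (or a Chernoff bound) the search completes its narrowing within budget with probability at least $2/3$; on that event, and on the event that all $\HPTestOut$ calls behave correctly (probability $1-n^{-c}$ by the union bound as above), it returns the correct lightest edge, giving the $2/3 - 1/n^c$ guarantee. When the budget is exhausted it returns $\emptyset$, so the only way to return a wrong nonempty answer is an $\HPTestOut$ error, which happens with probability at most $n^{-c}$; hence with probability $1-n^{-c}$ it returns either the lightest edge or $\emptyset$. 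I expect the main obstacle to be making the iteration-count tail bound fully rigorous: one must carefully separate the ``progress'' randomness (the Bernoulli$(q)$ successes of $\TestOut$) from the ``verification'' randomness ($\HPTestOut$), handle the dependence introduced by repeating a failed step with a fresh hash function, and confirm that the $Count$ threshold is large enough that the number of successful narrowings reaches $\log maxWt(T_x)/\log w$ except with probability $n^{-c}$.
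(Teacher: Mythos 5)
Your proof follows essentially the same route as the paper's: condition on the $\HPTestOut$ verifications being correct, model each narrowing step as a Bernoulli trial with constant success probability $q$, bound the iteration count by a binomial/geometric tail estimate plus a union bound over the $\HPTestOut$ errors, and handle $\Find$-C by Markov's inequality on the truncated budget. The only quibble is terminological: $\TestOut$ is one-sided (a returned ``true'' is always correct), so what the $TestLow$ check catches is a \emph{false negative} on the first nonempty subinterval rather than a ``false positive'' --- but the mechanism you describe is exactly the paper's.
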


\begin{proof}
$\Find$ is analyzed first.
We observe that if $\HPTestOut$ is always successfully, then $\Find$ will terminate successfully after no more than
$\lg {maxWt}/\lg (w-1)$ successful executions of $\TestOut$:
Let $I=(j_i, k_i)$ be the first interval containing an edge leaving $T_x$. $\TestOut$ always returns a 0 for earlier intervals, and returns a
1 with constant probability $q=1/8$ when $I$ is tested.
  If $\TestOut$ fails to return a 1 for interval $I$, then $TestLow$ will detect a 1 and the loop is repeated; otherwise the range is successfully narrowed.
The range is narrowed no more than $\lg {maxWt}/\lg (w-1)$ times.  Each successful narrowing requires an expected $1/q$ repetitions and overall, in expectation  $(1/q)\lg {maxWt}/\lg (w-1)=O(\log n/\log n \log n)$ iterations of Steps 4-8 suffice to return the lightest edge leaving $T$ (if such exists).

We bound the probability that $\TestOut$ fails $K$ times before succeeding $\lg {maxWt}/\lg (w-1)=O(\log n/\log \log n)$ times, where
$K=(c/q) \lg n $:
This is given by a tail bound on a random variable with a binomial distribution with $K+ \lg maxWt/\lg (w-1)$ trials and constant probability $q$ of
heads (success). Using Chernoff bounds,
the probability of this type of failure is $< 1/(2n^c)$ for sufficiently large $n$.

We now bound the probability that $\HPTestOut$ fails at least once after any of these calls to $\TestOut$: With an error parameter of $ \leq n^{-c-1}$ for  $\HPTestOut$, the probability of the latter over $2K=2(c/q)\lg n$ trials is less than $1/(2n^c)$ by a union bound.

We conclude that the probability of either event occurring is less than $1/n^c$, again by a union bound. Hence, w.h.p., the range is successfully narrowed to the minimum weight edge after $2(c/q)\log n$ iterations of Steps 4-8 or, if there is no edge leaving $T$, then
Step 7(b) is executed and $\emptyset$ is returned.

 For $\Find-C$, $\TestOut$ is restricted to make only $K'=(2c/q) \lg maxWt/\lg (w-1)$ repetitions.  For $\Find-C$ to return the lightest edge,
 $\TestOut$ cannot fail more than $K'$ times before achieving $\lg maxWt/\lg (w-1)$ successes and $\HPTestOut$ cannot fail once in $2(K' +
 2\lg maxWt/\lg (w-1) $ trials. The probability that the number of $\TestOut$  trials needed exceeds the expected number by a factor of $2c/q +1$ is less than $1/3$ for $c \geq 1$, by Markov's Inequality.
 We next bound the probability that $\HPTestOut$ fails during any one of the $2K' +2\lg maxWt/\lg (w-1)$ trials and repetitions. Since the error parameter for $\HPTestOut$ is $n^{-c-1}$, the union bound over all these $<n$ trials gives a probability of this type of error of $1/n^c$. For $\HPTestOut$ to return an incorrect
 lightest edge, $\HPTestOut$ must fail at least once.
Hence, if there is an edge leaving then with probability $2/3-1/n^c$, $TestOut-C$ returns the correct edge, with probability $1-1/n^c$ it returns $\emptyset$ or the correct edge, and with probability $<1/n^c$ it returns an edge leaving the tree which is not the lightest edge.

 \end{proof}

\subsection{Impromptu repairs of MST} \label{sec:well-sep}
We now apply $\Find{}$ to the problem of repairing an MST.
Assume that the updates are
well-separated in the sense that we can complete the processing of an edge
update before the next one arrives. Before
any update, assume the network maintains a minimum spanning forest, and each node knows some polynomial (in $n$) upper bound on the size $n$\footnote{This is only required to compute with probability of error a function of $n$}.

\paragraph{Delete$(u,v)$.} When an edge $\{u,v\}$ is deleted, if $u <v$, then if  $\{u,v\}$  was in the MST, then
$u$ initiates $\Find$ in the marked subtree containing $u$, $T_u$.  If
$\Find$ returns $\emptyset$, it means that $\{u,v\}$ was a bridge, and we
are done. Otherwise $\Find$ returns an edge $\{u',v'\}$.  Then $u$
broadcasts that $\{u',v'\}$ should be added to the minimum spanning
forest, and $u'$ forwards this message to $v'$. Both $u'$ and $v'$ mark the edge $\{u',v\}$. The
bottleneck of the complexity is the call $\Find(u)$ which uses
$O(n_u\log n)$ messages
for $n_u\leq n$
nodes in $T_u$.

\paragraph{Insert$(u,v)$.} When an edge $\{u,v\}$ is inserted, and $u <v$,
$u$ determines if its tree $T_u$ in the MST contains $v$ and if so, it determines
the heaviest edge $e$ on the path from $u$ to $v$. This is easily
done by a broadcast-and-echo from $u$. If $e$ is  heavier than $\{u,v\}$,
 $\{u,v\}$ is included in the minimum spanning forest, and $u$ broadcasts that $e$
should be removed from the MST. A constant number of broadcast-and-echoes are used, for a total
number of messages which is proportional to the size of $T_u$.

\smallskip

The
%correctness and
analysis of these operations follow from Lemma \ref{l:find}. With the extension of $\Find$ to superpolynomial edge weights given in the Appendix, the proof of Theorem \ref{t:repair} follows.

\subsection{Building an MST} \label{ss:buildMST}
In a synchronous network, building an MST from scratch is a
straightforward application of $\Find$.
Recall (the Introduction) that we assume that every node knows $n$\footnote{up to some constant factor. } and the list of the edge weights of its incident edges and that the edge weights of all edges are distinct.

The goal is for each node to mark a subset of its neighbors so that the resulting marked edges form an MST. The algorithm is an implementation of Bor\r{u}vka's parallel algorithm for constructing an MST. During the execution, the nodes are partitioned into fragments, each a connected component of the final MST. (Initially, each node is a singleton fragment).
At each round, in parallel, a minimum weight edge incident to each non-maximal tree (fragment) is found by a search started by the {\em fragment leader}.

{\it Electing a fragment leader} is straightforward and is similar to a broadcast-and-echo and ideas in \cite{KorachRotem}: Since this is a synchronous network, all nodes know when an iteration starts and thus when to start the leader election. Moreover, every leaf of a fragment knows it is a leaf and so should start.
Each leaf acts as if it has just received a broadcast message initiated by the leader (though the leader is not known yet). That is, the leaf sends an echo message to its (only) tree neighbor - thus designating that neighbor as its parent. As in broadcast-and-echo, every internal node who received an echo from all its neighbors but one, sends an echo to that last one. It is then easy to see that
either the tree has one median or two. In the first case, the echoes converge to that median. Let us elect this one the leader. In the second case, there are two neighboring medians. Let the one with the higher identity be the leader.

Let $maxTimeMST(n)$ be the maximum amount of time needed to carry out Steps (a)--(c) in a tree of size $n$. We assume a global clock with value $time$. Let $C$ be the (constant) probability that $\Find-C$ returns the minimum edge incident to a tree, if there is one. Let $c$  in the algorithm below be the desired (constant) parameter, such that the probability of success of the Build MST algorithm should be $1-n^{-c}$.

\medskip
\noindent
{\bf Build MST} \{{\it executed by every node $x$}\}
\vspace{-2mm}
\begin{enumerate}[noitemsep]
\item  $time \leftarrow 0$
\item
For $i=1$ to $(40c/C)\lceil{\lg n}\rceil$:
\vspace{-1mm}
\begin{enumerate}[noitemsep]
\item
Elect a leader in $T_x$

\item
If $x=leader$ then $x$ initiates $\Find-C$; else $x$ participates in $\Find-C$.
\item
If $x$ is an endpoint of the edge $\{x,y\}$ which has been returned by $\Find-C$, $x$ sends Add\_Edge message to $y$ across $\{x,y\}$.

\item While $time < i*maxTimeMST(n)$  wait; while waiting, if any Add\_Edge message is received over an edge, mark that edge.\label{wait}
\end{enumerate}
\end{enumerate}

\begin{lemma}
Let $c$ be any constant, $c\geq 1$. With probability $1-n^{-c}$, $Build\_ MST$ constructs an MST in time and message complexity $O(n \log^2 n/\log \log n)$.
\end{lemma}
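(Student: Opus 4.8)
The plan is to analyze $Build\_MST$ as a randomized implementation of Bor\r{u}vka's algorithm and to split the argument into three parts: a complexity count, a safety argument (no non-MST edge is ever marked), and a progress argument (all fragments merge within the prescribed number of rounds). The outer loop runs $T=(40c/C)\lceil\lg n\rceil=O(\log n)$ times. In each iteration the leader election is one broadcast-and-echo costing $O(|T_x|)$ messages and time per fragment, and since the fragments partition the $n$ nodes this sums to $O(n)$ per round. By Lemma~\ref{l:find} each $\Find-C$ call costs worst-case $O(|T_x|\log n/\log\log n)$ messages and time; summing over the disjoint fragments gives $O(n\log n/\log\log n)$ messages per round, and the time per round is that of the largest fragment, also $O(n\log n/\log\log n)$. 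Multiplying by the $O(\log n)$ rounds yields the claimed $O(n\log^2 n/\log\log n)$ bound for both time and messages, the synchronization through $maxTimeMST(n)$ keeping the rounds aligned.

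For safety, I would use that the edge weights are distinct, so the MST is unique and, by the cut property, the minimum-weight edge leaving any fragment belongs to it. Hence every edge a correct $\Find-C$ returns is an MST edge, and the classical fact that minimum outgoing edges chosen under distinct weights form a forest (with at most $2$-cycles, each corresponding to a single shared edge) guarantees that marking them never closes a cycle; thus the marked set is always a subforest of the MST. The only way to add a non-MST edge is for some $\Find-C$ to return a non-minimum leaving edge. Setting the internal failure parameter of each call to $n^{-(c+2)}$ (polynomial, as Lemma~\ref{l:find} permits), and noting that at most $n$ fragments per round over $O(\log n)$ rounds give at most $n^2$ calls, a union bound bounds the probability $\Pr[B_1]$ that any call ever returns a wrong edge by $n^{-c}$.

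The crux is progress. Conditioning on $\neg B_1$, every call returns either the correct minimum leaving edge (a \emph{success}, which by Lemma~\ref{l:find} occurs with probability at least $C$ on fresh, independent randomness; since a success is disjoint from returning a wrong edge, this conditioning keeps each success probability at least $C$ and preserves cross-call independence) or $\emptyset$. I would take as potential $g_i=f_i-f^{*}$, the number of fragments after round $i$ minus the number $f^{*}$ of connected components of the graph, so $g_0\le n$ and the algorithm is done exactly when $g_i=0$. Within each not-yet-complete component, all of its $k\ge 2$ current sub-fragments are non-maximal and each succeeds independently with probability $\ge C$; the Bor\r{u}vka halving fact shows that if $s$ fragments succeed in a round then the fragment count drops by at least $s/2$ (the chosen edges dedupe to a forest on the fragments, losing at most one edge per $2$-cycle). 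Since the number of non-maximal fragments is at least $g_i$, this gives $\Ep{g_{i+1}\mid\text{history}}\le (1-C/2)\,g_i$, hence $\Ep{g_T}\le n(1-C/2)^T$. With $T=(40c/C)\lceil\lg n\rceil$ this is far below $n^{-c}$, and because $g_T$ is a nonnegative integer, Markov's inequality gives $\Prp{g_T\ge 1}\le \Ep{g_T}\le n^{-c}$.

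Finally I would combine the two failure modes by a union bound: with probability at least $1-2n^{-c}$ the algorithm neither marks a wrong edge nor leaves any fragment unmerged, so it terminates with the unique MST, and rescaling $c$ absorbs the factor $2$. The step I expect to be the main obstacle is precisely this progress argument---justifying the per-round expected constant-factor decrease in the presence of the coupled, structure-dependent merging, and then extracting a high-probability guarantee. The device that keeps it clean is phrasing progress through the integer-valued excess potential $g_i$, so that an expectation bound plus Markov's inequality suffices and one avoids a delicate Chernoff analysis of the last few fragments.
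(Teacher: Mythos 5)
Your proof is correct, but it takes a genuinely different route from the paper's. The paper splits progress into two regimes: while the number of non-maximal trees exceeds $(8c/C)\lg n$ it applies a Chernoff bound to show a constant fraction of the per-fragment $\Find$-C calls succeed each phase, so the count drops geometrically; then a separate ``endgame'' claim handles the last $O(\log n)$ fragments, where concentration is unavailable, by arguing that each phase reduces the count by at least one with constant probability and counting successes over $O(\log n)$ further phases. You instead collapse the whole progress argument into a single multiplicative-drift statement on the integer potential $g_i=f_i-f^{*}$: linearity of expectation plus the Bor\r{u}vka dedup fact (the distinct chosen edges form a forest on the fragments, so $s$ successes merge away at least $s/2$ fragments) gives $\Ep{g_{i+1}\mid\text{history}}\le(1-C/2)g_i$, and Markov's inequality on the nonnegative integer $g_T$ finishes without any case split. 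This is cleaner and arguably tighter than the paper's two-stage analysis; what it costs you is (i) having to invoke the halving fact explicitly (the paper uses it only implicitly when asserting that successful merges reduce the count), and (ii) the conditioning on the global event $\neg B_1$, which as stated is slightly loose --- conditioning on a future-dependent event can in principle distort the per-call success probabilities and their independence. A cleaner packaging is to run the drift argument unconditionally on a coupled process that ``gives up'' on $B_1$ and then union-bound $\Prp{B_1}$ with $\Prp{g_T\ge 1}$ at the end; this is a presentational fix, not a gap, and your rigor here is comparable to the paper's own (which also asserts independence of the $X_j$ without further comment). Your explicit safety argument that no non-MST edge is ever marked is a welcome addition the paper leaves implicit.
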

\begin{proof}
We call each for-loop a {\it phase}. At the start of each phase there is a forest of trees consisting of all marked tree edges. If there is an edge leaving the tree in the the graph of all edges, it is a {\it non-maximal tree}.
The variable $maxTime(n)$ is set so that every node enters phase $i$ at the same time, after completing phases $j<i$.
We first show:

\smallskip

\noindent
{\it Claim 1:} After $(16c/C) \lg n $ phases,  the number of non-maximal trees is no greater than $(8c/C) \lg n$  with probability $1-1/n^c$.

\smallskip

\noindent
{\it Proof of Claim 1:}Fix a phase in which the number $F$ of non-maximal trees is greater than
$(8c/C)\lg n$.
  For each $T_j$ which is non-maximal at the start of
the phase, let $X_j=1$ if the execution of $\Find$ returns the minimum
weight edge incident to $T_j$ and 0 otherwise. Then each $X_j$ is an
independent random variable with constant probability $C$ of
success. Using Chernoff bounds we can see that at least $C/2$ of these
$\Find$'s succeed with high probability: $Pr(\sum X_j < (C/2) F
) < exp(-((1/2)^2 C(8c/C)\lg n))/2)< 1/n^{c}$.
Thus, w.h.p., the
number of non-maximal components is reduced by a fraction of $C/2$ in
each phase, until fewer than $(8c/C) \lg n$ non-maximal trees
remain. This requires no more than $\lg n /\lg (C/2)=O(\log n)$
phases.  Each phase executed by a tree of size $s$ uses a number of
messages $O(s\log n/\log \log n)$ or $O(n \log n/\log \log n)$ over
all non-maximal trees.  Now we show:

\smallskip

\noindent
{\it Claim 2:} If there are $c' \lg n$ non-maximal components to start, then after  $(2c' + 8c)/C) \lg n$ more phases, there are no non-maximal trees, with probability $1-1/n^c$.
\smallskip

\noindent
{\it Proof of Claim 2:} For each phase that there is a non-maximal tree which successfully runs $\Find-C$, the number of non-maximal components is reduced by one. We call a phase successful if there is at least one successful run of $\Find-C$. For any $c$, after $((2c'+8c)/C) \lg n$ phases with at least one  execution of $\Find$ each, at least $c'\lg n -1$ $\Find$'s will be successful with probability at least $1-1/n^{c}$.

\smallskip
Putting these claims together:
Let $c'=(8c/C)$. Then $(16 + 24)(c/C) \lg n$ phases suffice to reduce the number of non-maximal trees from $n$ to 0, with probability $ \geq 1-n^c$.
We conclude the proof of the lemma by observing that $O(\log n)$ phases require a total of $O(n \log^2 n/\log \log n)$ messages and time.
\end{proof}

\section{Unweighted edges}

We now present analogous results for unweighted graphs using less costly, somewhat different techniques.
\subsection{Find any edge leaving a tree}
\label{s:findany}
$\anyfind$, presented below, uses an expected constant
number of broadcast-and-echoes, to find any edge leaving $T_x$.
Thus in expectation, we save a factor $\log n/\log \log n$ in the asymptotic cost of $\Find$.

The procedure starts with $\HPTestOut$ to determine if there is an edge in the cut w.h.p.
 If $\HPTestOut$ returns 1,  a routine to find such an edge with a constant probability of success is run repeatedly until
   such an edge is found,
   yielding a constant expected time and message procedure.
To achieve a probability of error $n^{-c}$  in the running times claimed, we assume $x$ knows an $\epsilon(n) < 1/(2n^c)$ where $\epsilon^{-1}(n) $ is polynomial in $n$. $T$ below is $T_x$. We let $[r]$ denote the set $\{0,1,...,r-1\}$.
\medskip

\noindent
$\anyfind(x)$

\vspace{-4mm}
\begin{enumerate}[noitemsep]
\item
$Count \leftarrow 0$.

\item
$x$ initiates $\HPTestOut$ in $T$ with error parameter $\epsilon(n)$. If $\neg (\HPTestOut)$ then return $\emptyset$.

\item
 Determine the identity of an edge as follows:\\[2mm]
a) $x$ broadcasts a random pairwise independent hash function $h: [1,maxEdgeNum(T)]  \rightarrow [r]$ where $r$ is a power of 2 $>$ sum of degrees of nodes in $T$.\\[2mm]
b) Each node $y$ hashes the edge numbers of its incident edges using $h$, and computes  the vector $\vec{h}(y)$ s.t. ${h_i}(y)$ is the parity of the set of its incident edges whose edge numbers hash to values in  $[ 2^i]$  for $i=1,..., \lg r$.  If $y$ has no incident edges then
$\vec{h}(y)=\vec{0}$. \\[2mm]
c)  The vector $\vec{h(T)}=\XOR_{y \in T} \vec{h}(y)$ is computed up the tree, in the broadcast-and-echo return to $x$.  Then $x$ broadcasts $min=
\min \{i ~|~ h_i(T)=1\}$. \\[2mm]
 d)  Let $E(x)$ be the set of edge numbers of edges incident to $x$.
 Each node $x$ computes $w(x)=\XOR \{e ~|~ \{e \in E(x) \wedge h(e) < 2^{min} \} $ and
$w(T)=\XOR_{x \in T} w(x)$ is computed up the tree in the broadcast echo and returned to $x$. \\
\{{\it If there is exactly one edge leaving $T$ with $h(e)<2^{min}$,
then $w(T)$ is its edge number.}\}\\
\item
Test: $x$ broadcasts $w(T)$ to obtain $Sum=$ the number of endpoints in $T$ incident to the edge given by  $w(T)$. The test succeeds iff $Sum=1$.

\item
If  Test succeeds, return $w(T)$ else \\
for $\TestOut-C$, return $\emptyset$; \\
for $\TestOut$, if $Count\geq 16 \ln( \epsilon^{-1}(n))$ then return $\emptyset$; else increment $Count$ and repeat steps 3-5. \\

\end{enumerate}

 \vspace{-6mm}
 \noindent
  {\bf Proof of correctness}\\
   \vspace{-2mm}
  Let $h$ a 2-independent function from a universe $U$ into
$[2^\ell]$ for some $\ell \geq 2$. Let $W\subseteq U$ s.t.
$0<  |W|<2^{\ell-1}$.
\begin{lemma}  \label{l:findany} With probability $1/16$,
if $|W|>0$ then there is an integer $j$
such that exactly one $w\in W$ hashes to a value in $[2^j]$.
\end{lemma}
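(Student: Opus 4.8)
The plan is to reduce the existential statement to a single, well-chosen threshold level and then bound the probability of ``exactly one element lands there'' using only pairwise independence. For each integer $j$ set $N_j=|\{w\in W:h(w)\in[2^j]\}|$. Since the marginal of a $2$-independent hash is uniform, each $w$ satisfies $\Pr[h(w)\in[2^j]]=2^{j}/2^{\ell}=:p$, so $\mu:=\Ep{N_j}=|W|\,2^{\,j-\ell}$. As $j$ varies over the integers these expectations form a geometric progression of ratio $2$, so exactly one integer $j$ makes $\mu$ land in the half-open window $[1/4,1/2)$; I would fix this $j$. First I would check, using $0<|W|<2^{\ell-1}$, that this $j$ is in range, i.e. $0\le j\le\ell-2$, so that $[2^j]\subseteq[2^\ell]$ is a legitimate prefix of the hash range: the upper bound follows from $\mu<1/2\Rightarrow 2^{j}<2^{\ell-1}/|W|\le 2^{\ell-1}$, and the lower bound from $\mu\ge 1/4\Rightarrow 2^{j}\ge 2^{\ell-2}/|W|>1/2$ (using $|W|<2^{\ell-1}$), which forces $2^j\ge 1$.

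Next I would estimate $\Pr[N_j=1]=\Pr[N_j\ge 1]-\Pr[N_j\ge 2]$ invoking nothing beyond $2$-independence. Writing $A_w$ for the event $h(w)\in[2^j]$, pairwise independence gives $\Pr[A_w\cap A_{w'}]=p^2$ for distinct $w,w'$. A Bonferroni (truncated inclusion--exclusion) lower bound yields $\Pr[N_j\ge1]=\Pr[\bigcup_w A_w]\ge \mu-\binom{|W|}{2}p^2$, while a union bound over pairs gives $\Pr[N_j\ge2]\le\binom{|W|}{2}p^2$. Subtracting,
\[
\Pr[N_j=1]\ \ge\ \mu-2\binom{|W|}{2}p^2\ =\ \mu-|W|(|W|-1)p^2\ \ge\ \mu-\mu^2 .
\]
Since the chosen $j$ forces $\mu\in[1/4,1/2)$ and $g(\mu)=\mu-\mu^2$ is increasing on $[0,1/2)$, this gives $\Pr[N_j=1]\ge g(1/4)=3/16\ge 1/16$. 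As the event ``there is an integer $j$ isolating a single element'' contains the event $\{N_j=1\}$ for this fixed $j$, the claimed bound follows (in fact with room to spare).

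The only delicate point, and the sole place where the hypothesis $|W|<2^{\ell-1}$ is genuinely used, is the choice of level $j$. The estimate $\mu-\mu^2$ is informative only when $\mu<1$: once a bucket is expected to be full it no longer isolates a single element, so I must guarantee a threshold level whose bucket is expected to hold strictly less than one element yet at least $1/4$ on average. The assumption $|W|<2^{\ell-1}$ is exactly what provides this ``room,'' ensuring the geometric sequence of expectations dips into $[1/4,1/2)$ at a nonnegative, in-range index. Everything else is a routine second-moment computation, so I expect no further obstacle.
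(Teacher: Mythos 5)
Your proof is correct and follows essentially the same route as the paper's: both fix the unique dyadic level $j$ at which the expected number $\mu$ of elements of $W$ hashing into $[2^j]$ sits in a constant window strictly below $1/2$ (using $|W|<2^{\ell-1}$ to keep $j$ in range), and both exploit pairwise independence to arrive at the identical lower bound $\mu-|W|(|W|-1)p^2$. The only difference is bookkeeping --- you reach it via Bonferroni plus a union bound on the event $\{N_j\ge 2\}$, whereas the paper expands the ``exactly one'' event as a sum over singletons and bounds the conditional term --- and your resulting constant $3/16$ is in fact slightly sharper than the stated $1/16$.
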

\begin{proof} We prove the statement of the lemma for $j=\ell-\ceil{\lg |X|}-1$.
Then $1/(4|W|)<2^j/2^\ell\leq 1/|W|$.
Now
\vspace{-5mm}
\begin{align*}
\Pr_{h}&\left[\exists !\, w\in W: h(w)\in [2^j]\right]\\[-1mm]
&= \sum_{w\in W} \left(\Pr_{h}\left[h(w)\in [2^j]
\wedge \forall w'\in W\setminus\{x\}: h(w')\not\in [2^j]\right]\right)\\[1mm]
&= \sum_{w\in W} \left(\Pr_{h}\left[h(w)\in [2^j]\right]
\,\Pr_{h}\left[\forall w'\in W\setminus\{w\}:h(w')\not\in [2^j] \mid h(w)\in [2^j]\right]\right)\\[1mm]
&\geq \sum_{w\in W} \left(\Pr_{h}\left[h(w)\in [2^j]\right]
\left(1-
\sum_{w'\in W\setminus\{w\}}\Pr_{h}\left[h(w')\in [2^j]\mid h(w)\in [2^j]\right]\right)\right)\\
&=\sum_{w\in W} \left(\Pr_{h}\left[h(w)\in [2^j]\right]
\left(1-
\sum_{w'\in W\setminus\{x\}}\Pr_{h}\left[h(w')\in [2^j]\right]\right)\right) \hbox{by 2-wise independence} \\
%&= |W|\left(2^j/2^\ell\cdot (1-(|W|-1)2^j/2^\ell)\right)\\
%&>|W|/(4|X|)(1-|W|/(2|W|)=1/16.
&= |W|\left(2^j/2^\ell\cdot (1-(|W|-1)2^j/2^\ell)\right) \;\;>\;\;|W|/(4|X|)(1-|W|/(2|W|)=1/16.
\end{align*}
\end{proof}

\begin{lemma} \label{l:anyfind2} If there is an edge no leaving $T_x$, then $\anyfind (x)$ and $\anyfind -C(x)$ return $\emptyset$. Otherwise,
\begin{itemize}[noitemsep]
\item
$\anyfind (x)$ returns an edge leaving $T_x$ w.h.p.
 It uses expected time and messages $O(n)$;  and
\item
$\anyfind -C(x)$ returns an edge leaving $T_x$  with probability at least $1/16$, else it returns $\emptyset$. It uses worst case time and messages $O(n)$.
\end{itemize}
\end{lemma}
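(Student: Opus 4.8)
The plan is to analyze the two cases of Lemma~\ref{l:anyfind2} by tracking the correctness and cost of each step of $\anyfind$. First I would dispose of the easy direction: if no edge leaves $T_x$, then by Observation~\ref{obs:out-test} we have $E^\uparrow(T)=E^\downarrow(T)$, so $\HPTestOut$ in Step~2 always returns $0$ (the polynomial identity test is exact when the sets are equal), and the procedure returns $\emptyset$ immediately. This handles both $\anyfind$ and $\anyfind-C$.

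For the main direction, suppose there is at least one leaving edge. The key is to show that a single iteration of Steps~3--5 succeeds with probability at least $1/16$, where ``success'' means correctly identifying a leaving edge. I would argue this by applying Lemma~\ref{l:findany} with $U$ the edge-number universe and $W$ the set of edge numbers of edges leaving $T$ (noting $0<|W|<r/2$ by the choice of $r$ as a power of $2$ exceeding the sum of degrees, hence exceeding $|W|$). An edge with \emph{both} endpoints in $T$ contributes its edge number twice to the XOR $w(T)$ and so cancels; thus only leaving edges survive in the parity computations. Lemma~\ref{l:findany} gives, with probability $\geq 1/16$, the existence of a $j$ for which exactly one leaving edge hashes into $[2^j]$; the choice $min=\min\{i\mid h_i(T)=1\}$ picks the smallest threshold at which the parity over leaving edges is odd, so when the isolation event of Lemma~\ref{l:findany} holds, exactly one leaving edge hashes below $2^{min}$ and $w(T)$ equals its edge number. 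The verification in Step~4 ($Sum=1$) then confirms this, and conversely ensures correctness: if $w(T)$ is not a genuine leaving edge number, the endpoint count cannot equal $1$, so the Test rejects and no wrong edge is ever returned.

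From this per-iteration guarantee the two bullets follow. For $\anyfind-C$ we run Steps~3--5 once, so it returns a correct leaving edge with probability $\geq 1/16$ and otherwise $\emptyset$; each broadcast-and-echo costs $O(|T_x|)=O(n)$ messages and time, and there are $O(1)$ of them in a single pass, giving worst-case $O(n)$. For $\anyfind$ we repeat until Test succeeds or the counter reaches $16\ln(\epsilon^{-1}(n))$. Since each trial succeeds independently with probability $\geq 1/16$, the expected number of trials is $O(1)$, yielding expected $O(n)$ cost; and the probability that all $16\ln(\epsilon^{-1}(n))$ trials fail is at most $(1-1/16)^{16\ln(\epsilon^{-1}(n))}\leq e^{-\ln(\epsilon^{-1}(n))}=\epsilon(n)$, so conditioned on a leaving edge existing, $\anyfind$ returns one with probability $1-\epsilon(n)$, which is w.h.p. for the chosen $\epsilon(n)<1/(2n^c)$. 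Combining the failure probability of Step~2's $\HPTestOut$ (also $\leq\epsilon(n)$) by a union bound keeps the total error w.h.p.

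I expect the main obstacle to be the careful bookkeeping in Step~3's isolation argument: one must verify that the cancellation of internal (non-leaving) edges is complete in both the threshold-parity vector $\vec{h}(T)$ and in $w(T)$, and that the index $min$ selected from the \emph{aggregate} parities actually coincides with the threshold $j$ produced by Lemma~\ref{l:findany} whenever isolation occurs. The subtlety is that $min$ is defined from the data regardless of whether an isolating $j$ exists, so the clean implication is really ``isolation at some $j \Rightarrow$ isolation at $min$ with $w(T)$ a valid edge number,'' which then makes Step~4's test the arbiter of correctness; getting the direction of this reasoning right, rather than assuming $min=j$ outright, is where the argument needs the most care.
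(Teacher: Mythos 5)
Your proposal is correct and follows essentially the same route as the paper's own proof: the exactness of $\HPTestOut$ on the empty cut, the isolation event of Lemma~\ref{l:findany} applied to the set $W$ of leaving edge numbers with cancellation of internal edges, the observation that isolation at some $j$ forces isolation at $min\leq j$ so that $w(T)$ is the isolated edge number, and the standard $(1-1/16)^{16\ln(\epsilon^{-1}(n))}$ amplification plus union bound with the $\HPTestOut$ error. The ``subtlety'' you flag at the end is precisely the step the paper handles by noting $[2^{min}]\subseteq[2^{j}]$, so no gap remains.
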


\begin{proof}
Let $W$ be the set of edge numbers of edges  leaving $T$. If $|W| =0$, then $\HPTestOut$ returns $\emptyset$ and $x$ returns $\emptyset$. If $|W|>0$ then with probability $\geq 1-1/(2n)^c$ $\HPTestOut$ succeeds and $x$ continues to Step 3.
Given $x$ goes on to Step 3,  by Lemma \ref{l:findany}, with probability at least $1/16$, there is a $j$ such that
exactly one edge with distinct edge number $e$ in $W$ hashes to a value in $[2^j]$ (``Event A"). Because all edges incident to $T$ which are not leaving $T$ have both endpoints in $T$, their edge numbers $\XOR$ to $\vec{0}$, when summed over $T$.  Hence, when Event A occurs,
 $\XOR_{y \in T} h_j (y)= \XOR_{e' \in W}h_j(e') =h_j(e)=1$,
so $min\leq j$.  However, $\XOR_{y\in T} h_{min} (y)=1$ implies that
there is at least one edge number in $W$ which hashes to $[2^{min}]\subseteq [2^j]$, so we
conclude that there is exactly one such edge number $e\in W$ hashing to $[2^{min}]$.

When Event A occurs, in Step 4, $w(T)=e$ and Test succeeds in Step 5 and an edge leaving $T$ is returned. Thus the probability of success of $\anyfind-C$ is the probability that  $\HPTestOut$ succeeds, followed by Event A which is $\geq 1/16-1/(2n^c)$. In $\anyfind$, if $\HPTestOut$ succeeds, then Steps 3-5 are repeated up to  $16\ln(\epsilon^{-1}(n))=16\ln(2 n^c)$ times
until they succeed. The probability of failure of
all these repetitions is $\leq (1-1/16)^{16\ln(2n^c)} <1/(2n^c)$. The total probability of failure is therefore no more than this probability plus
the probability of failure of $\HPTestOut$ for a total probability of failure $\leq 1/n^c$. The expected number of repetitions of Steps 3-5 until success is 16 (and the worst case is $O(\log n)$).

Since a single run of Steps 1-5 requires $O(n)$ time and messages, the lemma follows.

\end{proof}

\subsection{Building an ST}
This algorithm is obtained by modifying the algorithm for  building of the MST. Two modifications are necessary.
The first is the substituting of $\anyfind-C$ for $\Find-C$ in each. The replacement of the $O(n \log n)$ $\Find-C$ by
$\anyfind-C$ reduces the asymptotic costs by a factor of $\log n/\log \log n $.

The second is more subtle. In $\Find$, when all fragments pick minimum weight edges leaving them, all of them are MST edges. This is because the weights of the edges are distinct, there is only one minimum weight edge leaving any fragment. Moreover, such an edge must be in the MST. When $k$ fragments of the unweighted graph pick edges leaving them, it is possible that $k$ distinct edges are picked and (at most one) cycle is formed. by ``potential" tree edges. This needs to be detected before the next phase begins.

If all nodes run the leader election
algorithm described in Section
4.3, the nodes on the cycle will be exactly the set of nodes which
fail to hear from all but
two
of their neighbors.
 After the maximum
time needed for leader election, these nodes will be aware they are on
a cycle.
 Moreover, they know their neighbors in the cycle, since they have
not heard from them.
Each node randomly picks one of the two edges
incident to it in the cycle to exclude and sends a message along that
edge to its other endpoint. If some edge is picked by both its
neighbors, then this edge is unmarked, i.e., not added to the
tree. Leader election is again run to test if there is a cycle.  If
there still is a cycle, all of the edges in the cycle are unmarked and
not included as tree edges in the next phase. \\

The analysis of this algorithm appears in Appendix \ref{app:st}. Intuitively, beyond the analysis of Build MST, it shows that an edge is likely to unmark (breaking the cycle) in high probability. Note that at most of half of the chosen outgoing edges are unmarked, so ``enough'' mergers still occur. This ensures progress. If the cycle is small and not edge is marked, then the whole cycle is removed. Since the (removed) cycle is small, still ``enough'' mergers occur. This establishes the following lemma.

\begin{lemma}
\label{lem:ST}

Let $c$ be any constant, $c\geq 1$. With probability $1-n^{-c}$, there is an algorithm which constructs an ST in time and message complexity $O(n \log n)$.
\end{lemma}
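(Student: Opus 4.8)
The plan is to mirror the two‑regime Bor\r{u}vka analysis used in the proof of the Build MST lemma, substituting $\anyfind-C$ (worst‑case $O(n)$ per fragment, success probability $\geq 1/16$ by Lemma \ref{l:anyfind2}) for $\Find-C$, and folding in the cost and correctness of the cycle‑breaking step described in Section \ref{s:findany}. I would call each iteration of the outer loop a \emph{phase}, and at the start of each phase I would consider the current forest of fragments (maximal marked trees), calling a fragment \emph{non-maximal} if it has a leaving edge. As in Build MST, I would first bound the number of phases needed to eliminate all non-maximal fragments and then multiply by the per‑phase cost.

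For per‑phase progress, suppose $F$ non-maximal fragments enter a phase. Each independently runs $\anyfind-C$ and returns a genuine leaving edge with probability $\geq 1/16$; by a Chernoff bound at least a constant fraction (say $F/32$) succeed w.h.p. whenever $F=\Omega(\log n)$. Viewing the $\geq F/32$ picked edges as a directed graph on the fragments (each successful fragment has out‑degree one) gives a pseudoforest, whose only obstruction to being a genuine forest is the presence of cycles. Here I would invoke the cycle‑breaking routine and its analysis (deferred to Appendix \ref{app:st}): it guarantees that at most half of the picked edges are ever unmarked, so at least $F/64$ marked merge edges survive and, the survivors forming a forest on the fragments, each reduces the fragment count by exactly one. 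Hence the number of fragments drops by at least $F/64$, i.e.\ by a constant fraction, which is exactly the geometric shrinkage the counting needs.

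The argument then splits into two regimes, as in the Build MST proof. In the first regime, while $F=\Omega(\log n)$, the constant‑fraction decrease (the Claim~1 style Chernoff bound) reduces the number of non-maximal fragments to $O(\log n)$ within $O(\log n)$ phases w.h.p. In the second regime, once only $O(\log n)$ fragments remain, concentration is unavailable, so I would argue in the Claim~2 style that over $O(\log n)$ further phases each surviving non-maximal fragment merges at least once --- since in each phase such a fragment makes progress with constant probability and there are only $O(\log n)$ of them --- eliminating all non-maximal fragments w.h.p. Summing the two regimes, $O(\log n)$ phases suffice. For the cost, leader election, the single $\anyfind-C$ call, and the cycle detection/breaking each use a constant number of broadcast‑and‑echoes, so a fragment of size $s$ spends $O(s)$ messages and time, and $O(n)$ summed over all fragments in a phase; hence $O(n\log n)$ messages and time overall, matching the claimed bound.

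The hard part will be the correctness of the cycle handling rather than the counting. I must ensure that the marked edge set remains a forest throughout, i.e.\ that every cycle among the picked edges is actually destroyed before the next phase; otherwise the construction could close a cycle and fail to return a spanning tree. The randomized single‑round step in which each cycle endpoint drops one of its two incident cycle edges does not always break the cycle, which is why the routine falls back to unmarking the entire cycle; quantifying that the two‑sided bound ``at most half of the picked edges are unmarked, yet every cycle is broken'' holds w.h.p.\ is the delicate estimate, and this is precisely what Appendix \ref{app:st} establishes. Granting that bound, the progress and cost arguments above go through by a union bound over the $O(\log n)$ phases, absorbing the $\anyfind-C$ error $\epsilon(n)$ and the per‑phase Chernoff failures, and yielding overall success probability $1-n^{-c}$.
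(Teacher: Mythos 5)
Your overall architecture matches the paper's: a two-regime Bor\r{u}vka count (Chernoff while the number of non-maximal fragments is $\Omega(\log n)$, then constant-probability progress per phase once only $O(\log n)$ remain), with $\anyfind-C$ replacing $\Find-C$ and the cycle-breaking step charged as $O(1)$ extra broadcast-and-echoes per fragment per phase. The cost accounting and the phase count are right.

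The gap is in the progress argument for the cycle-handling step. You assert that the cycle-breaking routine ``guarantees that at most half of the picked edges are ever unmarked,'' and later frame the appendix as establishing w.h.p.\ the two-sided bound ``at most half unmarked, yet every cycle is broken.'' Neither is what the routine gives, nor what the paper's Appendix~\ref{app:st} proves. The randomized single round (each cycle node drops one of its two incident cycle edges) breaks a $k$-cycle only with probability $1-2^{-(k-1)}$; when it fails --- a constant-probability event for small cycles, e.g.\ probability $1/2$ for a $2$-cycle --- the fallback unmarks \emph{every} edge of the cycle, so the ``at most half'' bound simply does not hold in that branch, and no w.h.p.\ statement can rescue it for short cycles. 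Correctness (the forest invariant) is fine either way, since unmarking the whole cycle also destroys it; what is at stake is progress. The paper's actual argument is a case analysis you are missing: if the wholesale-unmarked cycle touches fewer than half of the fragments that successfully picked edges, a constant fraction of mergers still survives (the shrinkage factor degrades from $C/2$ to $C/4$); if it touches at least half of the $\Omega(\log n)$ successful fragments, the probability that the random round failed on a cycle that long is $2^{-\Omega(\log n)}=n^{-\Omega(1)}$ and is absorbed into the union bound; and in the few-fragments regime one argues only that with constant probability per phase at least one new marked edge survives. Without this case split, your claim that ``at least $F/64$ marked merge edges survive'' is unsupported, and the constant-fraction shrinkage on which both regimes of your counting rest does not follow.
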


\subsection{Repairing an ST}

This is a straightforward adaptation of the methods used for repairing an MST, except that $\anyfind$ is used in place of $\Find$
and a factor of $\log n/\log\log n$ is saved from the asymptotic cost.
%\shay{Shouldn't it be "in the order of" log n over log log n? }
%fixed it--val

\section{Open problems and conclusions}

We have adapted a technique from streaming and dynamic sequential graphs to find a surprising result, that the problem of constructing a broadcast tree can be done with $O(n \log n)$ messages (and time)
in the CONGEST model w.h.p., a problem believed to have a lower bound of $\Omega(m)$ on the number of message for 25 years or more.
(In a model allowing much longer messages, it was known how to avoid sending messages over some edges \cite{KorachKMoran}; intuitively, \cite{vainish} showed that
for each such avoided edge, the identity of one of its endpoints needs to be delivered uncompressed to the other endpoint; we have shown that those identities could be compressed significantly if the ID space is of a reasonable size, up to even exponential in $n$).
 We also have shown a very simple way to repair ST's and MST's in
$O(n)$ and $O(n \log n/\log\log n)$ expected time and messages; previously it was suggested that reducing the message complexity to $o(m)$,
 requires auxiliary information to be stored between updates. By avoiding the need to (store and) distribute  auxiliary information, we also manage to make the $o(m)$ message complexity worst case rather than just amortized as in previous papers.
Can these yield practical methods for real dynamic networks?

A number of interesting theoretic problems remain. For ST and MST construction,
  Can $ST$ be constructed by a deterministic or Las Vegas algorithm in $o(m)$ messages
   in the $K_1$ model?
 What kind of bounds need the nodes know of $n$?
Can these results be made to work in the asynchronous model of communication?
Is it possible to form an ST in time less than $O(n \log n) $ with $o(m)$ messages?
Finally, are $O(n \log n/\log\log n)$ messages required for $\Find$ or can this be pushed closer to the cost of $\anyfind$?

\medskip
\noindent
{\bf Acknowledgment:}
We would like to thank Moni Naor, Gopal Pandurangan, and Ely Porat for useful comments.

\newpage

\appendix
 
\section{Accommodating superpolynomial sized edge weights}

Suppose the maximum edge weight has $w$ bits where $w$ is the message size.  We show that $O(\log n/\log \log n)$ broadcast-and-echoes suffice. In the previous subsection, the $\log n$-wise "pivots" were chosen obliviously. Here,
we use pivots based on randomly chosen edges.

Let $d$ be the total number of endpoints of nontree edges incident to the tree. Let $k=\sqrt{\log n/\log\log n}$. The $\Sample(p)$ routine described below returns $r$ sequences of $w/k$ bits from randomly sampled  edges with prefix $p$.
These edges are nontree edges with one or two endpoints in the tree, such that each non-tree edge with prefix $p$ incident to $T_x$ is picked with probability $1/m$ or $2/m$ where $m$ is the  total number of such edges.\\

\noindent
{\bf $\Find(x)$} \{{\it finds minimum cost edge in $(T_x, V\setminus T_x)$}\}
\begin{enumerate}[noitemsep]

\item
 $x$ sets $j \leftarrow 1$; $k\leftarrow w/k$, $P=\emptyset$,  and announces start,  and sends ``start"  to initiate with $\TestOut(x,j,k)$.

 \item  If $\HPTestOut(x, j, k)=0$. $x$ broadcasts ``stop" and returns $\emptyset$,

 \item while $j<k$ repeat:

\{Loop 1\}
 \item $x$ broadcasts one $O(\log n)$ bit  low probability odd hash function $f$ and one high probability hash function $F$.

 \item Run $\Sample( j,k )$\\

\item  In parallel for $i=0,1, 2,..., w/r$, run $\TestOut(x, p \cdot j_i \cdot \bar{0}, p\cdot j_{i+1} \cdot{\bar{0}})$ using $f$ on each interval where $j_0=j$ and  $j_{k+1}=k $, and other $j_i$'s are given by $\Sample(p)$. Assume they are ordered  by value.\\
\{End Loop 1\}

\item Let $min$ be the minimum $i$ s.t. $\TestOut(x,j_i p \cdot j_i, p\cdot j_{i+1}) =1$.\\
 \{Verify that there are no edges with weights with lower prefixes which leave the  tree by testing: \}\\
  $\HPTestOut (x,j_i p \cdot j_i, p\cdot j_{i+1}-1)=1$. If there are rerun
the previous step to recompute the minimum.

\old{
 \item If  $min=0$  then \{test  if  $p\cdot j_1$ is a prefix of the lowest weight leaving the tree\} \\
 if $\HPTestOut(x, p \cdot j_0 \cdot \bar{0}, p\cdot j_{1} -1 \cdot{\bar{111}})=0$ and $\HPTestOut(x, p \cdot j_1 \cdot \bar{0}, p\cdot j_{1} \cdot{\bar{1}})=1$ then set $p = p \cdot j_1$.}
 \item
 \{continue to look for an extension of $p$ or a single edge\}\\

If $j_{min} =j_{min+1}$ extend $p$ to $p\cdot j_min$, set  $j$ to $\{0\}^{w/r}$ and $k=\{1\}^{w/k}$.
\item
Else set $j $ to $j_{min}$ and $k $ to $j_{min+1}$; broadcast these values.

\item
return the edge given by augmented weight $j$.

\end{enumerate}
\old{
\begin{enumerate}[noitemsep]
\item $\ell \leftarrow \lfloor(k+j)/2 \rfloor$.
\item Let $min$ be the minimum $i$ s.t. $\TestOut(x,j_i,k_i) =1$ then $x$ broadcasts $min$ and sets $j=j_i$ and $k=k_{i}$ and repeats.
\end{enumerate}
\item
Return the edge with augmented weight $j$.
 
\end{enumerate}}

Let  $m_y$ be the number of nontree edges incident to node $y$ whose weights have property $P$. Let $m_T =\sum_{y\in T} m_y$. Let $S$ be the multiset of such edges where an edge appears twice if both its endpoints are in $T$.

\noindent
{\bf $\Sample(j,k)$} \{returns the prefixes of $r$ edges drawn uniformly at random from $S$ whose weights are in the range $[p\cdot j, p\cdot k]$.\}\\

To implement $\Sample(p,j,k)$, $x$ initiates a broadcast-and-echo to its tree. Consider the tree rooted at $x$. On the echo: each node $y$ determines and stores $\sum_z m_z$ over all $z$ in the subtree rooted at $y$. Starting with the leaves each node $y$ passes up this sum, adding on $m_y$.

  Let each node arbitrarily order its children. Then  to sample $r$ elements, $x$ randomly determines how many samples come from itself and from its children by drawing $r$ random numbers in the range from $\{1,..,total\}$. It randomly chooses the samples from itself, and sends the
 the number of requests which fall into the range of each child's edges to the child, which repeats this procedure. This requires only $\log r$ bits.
 %return% \shay{If bits, and if not in a big Oh notation, should be lg to be consistent}
 Another echo returns the $k$ samples in parallel as each node affixes its choices.
No more than $r$ prefixes are sent to the root in total , of size $w/r$ so that they all fit in one message.

\begin{lemma}
W.h.p., the lightest edge is in the weight interval $(p \cdot j, p\cdot k)$ or there is no such edge and the algorithm returns $\emptyset$.
\end{lemma}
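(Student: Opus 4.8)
The plan is to establish the claim as a loop invariant maintained with high probability: at the start of every iteration of Loop~1, the augmented weight of the lightest edge leaving $T_x$ lies in the interval $[p\cdot j,\, p\cdot k]$, or there is no leaving edge at all. First I would condition on the event $\mathcal{E}$ that \emph{every} call to $\HPTestOut$ made during the execution returns the correct answer. Since each such call errs with probability at most $\epsilon(n)$, which is chosen polynomially small, and since the execution issues only polynomially many such calls (indeed $O(\log n/\log\log n)$ iterations, each making $O(1)$ calls), a union bound gives $\Pr[\mathcal{E}]\geq 1-n^{-c}$. Everything that follows is argued under $\mathcal{E}$, together with the fact (stated earlier) that $\TestOut$ never returns a false positive.

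For the base case, Step~2 runs $\HPTestOut(x,j,k)$ over the full weight range. Under $\mathcal{E}$, if there is no leaving edge this returns $0$ and the algorithm correctly returns $\emptyset$; otherwise the full range trivially contains the lightest leaving edge, so the invariant holds entering Loop~1. For the inductive step, assume the invariant at the start of an iteration. The call to $\Sample(j,k)$ produces pivots $j_1\leq\cdots\leq j_r$ lying in $[j,k]$, which together with $j_0=j$ and $j_{r+1}=k$ tile the current range into subintervals. Let $min$ be the least index with $\TestOut(x,\,p\cdot j_{min},\,p\cdot j_{min+1})=1$. Because $\TestOut$ has no false positives, there is genuinely a leaving edge in $[p\cdot j_{min},\,p\cdot j_{min+1}]$, and the verification call $\HPTestOut(x,\,p\cdot j,\,p\cdot j_{min}-1)$ certifies under $\mathcal{E}$ that no leaving edge lies strictly below this subinterval; if it ever reported one, $min$ is recomputed, so the first subinterval containing a leaving edge is correctly identified. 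Hence the lightest leaving edge lies in $[p\cdot j_{min},\,p\cdot j_{min+1}]$. In Case~2 ($j_{min}<j_{min+1}$) we set $(j,k)\leftarrow(j_{min},j_{min+1})$, which preserves the invariant directly; in Case~1 ($j_{min}=j_{min+1}$) the next chunk of the lightest edge's augmented weight is pinned to $j_{min}$, so extending $p$ to $p\cdot j_{min}$ and resetting $(j,k)$ to the full range of the following chunk again preserves it. When the loop exits with $j=k$, the interval $[p\cdot j,\,p\cdot k]$ is a single augmented weight, which under $\mathcal{E}$ equals that of the lightest leaving edge; the algorithm returns the corresponding edge, and combining with $\Pr[\mathcal{E}]\geq 1-n^{-c}$ yields the lemma.

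The hard part will be controlling the false negatives of $\TestOut$, which succeeds on a nonempty subinterval only with constant probability: a false negative could push $min$ past the true lightest subinterval and cause the algorithm to narrow into the wrong range. The verification $\HPTestOut$ over $[p\cdot j,\,p\cdot j_{min}-1]$ is exactly what rules this out w.h.p., and most of the work lies in checking that this verification, in combination with the no-false-positive guarantee of $\TestOut$, pins the lightest edge to the correct subinterval in every iteration. A secondary subtlety I would need to verify is that the values returned by $\Sample(j,k)$ are genuine augmented-weight prefixes lying in $[j,k]$ and are correctly sorted before testing, so that the subintervals truly tile the current range; otherwise a leaving edge could fall into a gap between consecutive pivots and escape detection, breaking the invariant.
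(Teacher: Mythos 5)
Your proof is correct and takes essentially the same route as the paper's: an induction on the loop invariant, using the no-false-positive guarantee of $\TestOut$ together with the $\HPTestOut$ verification of the lighter subintervals to pin down the interval containing the lightest leaving edge, and a union bound over the (constantly many per iteration, hence polynomially many overall) high-probability tests. The paper's own proof is a terser version of the same argument; your added caveat about the $\Sample$ pivots tiling the range is a reasonable extra check but not a different approach.
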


\begin{proof}
Initially this is true. If there is no such edge and $\HPTestOut(j,k)$ returns 0 w.h.p. and the algorithm returns $\emptyset$.

Assume it's true at the start of the loop. Then $\TestOut$ must return a 1 for some interval or it is rerun. If it returns a 1 then
the interval tested must contain an edge leaving the tree; the lighter intervals are tested w.h.p. to confirm they have no lighter edge leaving the tree. Therefore the interval $min$ must contain the lightest edge. If $j_min$ and $j_{min+1}$ agree then $p \cdot j_{min}$ must be the prefix of the lightest edge weight.  Each high prob. test has probability of failing of $1/n^c$.
There are only a constant number per iterations, hence by a union bound they all succeed w.h.p.

\end{proof}

\begin{theorem}
In a tree whose nontree edge weights are of length $w$ bits, there is an asynchronous algorithm to find the lightest nontree edge leaving the tree in $O(\log n/\log\log n)$ expected broadcast-and-echoes with message size $w$.
\end{theorem}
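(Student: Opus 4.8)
The preceding lemma already supplies correctness: the invariant that the lightest leaving edge stays inside the current weight range $(p\cdot j,\,p\cdot k)$, or that $\emptyset$ is correctly returned, is maintained w.h.p. across all iterations because every range-narrowing is confirmed by an $\HPTestOut$. So the remaining task is purely to bound the \emph{expected number of iterations of Loop~1}, each of which is a constant number of broadcast-and-echoes whose messages stay within $w$ bits (one or two to run $\Sample$, one to run the parallel single-bit $\TestOut$'s, and a constant number of $\HPTestOut$'s for verification). I would fix the parameters accordingly: the $w$-bit weight is cut into $k=\sqrt{\log n/\log\log n}$ chunks of $w/k$ bits, and each $\Sample$ returns $r=k$ pivots of $w/r=w/k$ bits, so that the $r$ pivots fit in one $w$-bit message and the $r{+}1$ single-bit $\TestOut$ echoes fit in one word.

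The heart of the analysis is a progress argument on the multiset $S$ of nontree edges incident to $T$ that are consistent with the current prefix $p$ and range (an edge with both endpoints in $T$ appears twice). I would classify each iteration as either a \emph{prefix extension} (the minimum interval collapses to a single chunk value $c$, so $p$ becomes $p\cdot c$ and the range resets) or a \emph{narrowing} (the range shrinks to $[j_{min},j_{min+1})$). The total number of prefix extensions over the whole execution is at most $k$, since the weight has only $k$ chunks. For narrowings I would prove the quickselect-style fact that, when $r$ pivots are drawn uniformly from $S$, the interval bracketing any fixed value holds only $O(|S|/r)$ elements of $S$ in expectation. The target here is the \emph{lightest} leaving edge $v$; since $v$ is the smallest leaving-edge value in range, its bracketing interval is exactly the leftmost interval containing a leaving edge, which is the $min$ interval the algorithm selects. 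Here I use that $\TestOut$ has no false positives and that the $\HPTestOut$ verification forces a rerun whenever a strictly lower interval actually contains a leaving edge, so the selected $min$ is correct after $O(1)$ expected repetitions. Hence each narrowing shrinks $|S|$ by an expected factor $\Theta(r)$.

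To convert this into a round count, let $M_t$ denote $|S|$ after $t$ narrowings and put $Y_t=\log_2 M_t\ge 0$. While $M_t\ge 2$ the reduction gives $\mathbb{E}[M_{t+1}\mid M_t]\le c M_t/r$ for a constant $c$, so by concavity of the logarithm $\mathbb{E}[Y_{t+1}\mid M_t]\le Y_t-\log_2(r/c)$, i.e. $Y_t$ has a downward drift of $\Theta(\log r)=\Theta(\log\log n)$ per narrowing. Since $d=O(n^2)$ forces $M_0=|S_0|\le d$ and hence $Y_0=O(\log n)$, an optional-stopping argument on the supermartingale $Y_t+t\log_2(r/c)$ bounds the expected number of narrowings until $v$ is isolated ($M_t=1$) by $Y_0/\log_2(r/c)=O(\log n/\log\log n)$. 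Adding the at most $k=\sqrt{\log n/\log\log n}=o(\log n/\log\log n)$ prefix extensions and the $O(1)$ expected $\TestOut$/$\HPTestOut$ repetitions per iteration, the expected number of iterations, and therefore of $w$-bit broadcast-and-echoes, is $O(\log n/\log\log n)$, as claimed.

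The main obstacle I anticipate is precisely the random-pivot reduction lemma and its lift to an expected running time. The subtleties are that the pivots are sampled from the multiset $S$, which mixes chords (both endpoints in $T$, counted twice and invisible to $\TestOut$) with genuine leaving edges, so the pivot distribution and the quantity being reduced, $|S|$, must be reconciled with what $\TestOut$ can actually detect; that the target is the \emph{minimum} leaving edge rather than a uniformly random element, which is exactly what lets me identify its bracketing interval with the algorithm's $min$ interval; and that the per-round expected shrinkage must be turned into a bound on the \emph{expected total} number of rounds via the supermartingale drift above. The remaining pieces, namely verifying that $\Sample$ draws uniformly from $S$, that every message respects the $w$-bit bound, and that the constant-probability $\TestOut$ failures cost only a constant factor, are routine given the earlier sections.
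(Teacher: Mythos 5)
Your proposal follows essentially the same route as the paper's proof: it splits the iterations into prefix extensions (at most $k$, one per $w/k$-bit chunk) and range narrowings, and rests on the same key fact that $r$ pivots sampled uniformly from the multiset $S$ bracket the lightest leaving edge in an interval containing $O(|S|/r)$ elements, yielding $O(\log_r m_T)=O(\log n/\log\log n)$ expected narrowings plus $O(1)$ expected $\TestOut$ repetitions per round. The only difference is in the last technical step --- where the paper counts constant-probability ``successful'' samplings that each shrink $S$ by a factor of $r/2$, you run a supermartingale drift argument on $\log|S|$ --- and your version additionally makes explicit some details the paper leaves implicit (the doubled chords in $S$, and why the target's bracketing interval coincides with the algorithm's $min$ interval).
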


\begin{proof}
Correctness follows from the lemma.  We  first examine the number of iterations of loop 1.

We first note that loop 1 terminates when $\TestOut$ succeeds in the interval containing the lightest edge. This happens with constant probability. Hence it repeats a constant number of expected times.

Consider the edges (and possible duplicates)  in $S$ ordered by weight.
With each sampling, there is a constant probability that a sample edge will be chosen which is within $m_T/r$ of the lightest edge on either side of the ordering, or there are fewer than $m_T/r$ such edges. Hence if their prefixes are different,  there is a constant probability that $S$ in the next round has size $2m_T/r$. The number of these ``successful" samplings needed to shrink  the number of such edges to less than $r$ is $log_r m_T$.
%return% \shay{log?}
The expected number of samplings to achieve this many successful rounds is $O(log_r m_T) \leq \log (n^2) /\log r= O(\log n/\log \log n).$

On the other hand, if the prefixes are the same, then we extend the prefix another $w/k$ bits. The maximum number of these samplings is $w/(w/r)=r=\sqrt{\log n/\log\log n}$.
\end{proof}

\section{Analysis of Lemma \ref{lem:ST}}
\label{app:st}

We analyze Build ST by modifying the analysis of Build MST.
We observe that with probability at least $1-1/2^{k-1}$ for a cycle of size $k$, at least one edge is unmarked and there is no more cycle, while no more than half the edges in the cycle are unmarked.  If an edge is found in the cycle that can be unmarked then if the number of fragments would have dropped by a factor of $C/2$ in the analysis of Build ST, in Build MST it drops by at least a factor of $C/4$ and the analysis is not very different.

Suppose no edge in the cycle is found and all edges in the cycle are unmarked.
As in the analysis of Build MST, there are two cases.
The first is when there are at least $c' \lg n$ fragments:
If the cycle involves less than half the fragments which successfully found edges then the number of edges which become unmarked is less than half the
total number of edges which were marked, and again, the number of fragments drops by a factor of $C/4$ instead of $C/2$.
The case where the edge unmarking does not break the cycle but
it involves at least half the $c' \lg n$ fragments
(so the whole large cycle is removed) happens with probability less than
than $1/2^{(c'/2) \lg n -1}$.
 For $c' >2c+1$, w.h.p., this case does not happen.

The second case is when the number of fragments is less than $c' \lg n$. There is a probability of at least $C$ that at least one fragment will find an edge leaving, and given that this happens, there is a probability of at least $1/2$ that if a cycle is formed, it will be broken, so that at least one new marked edge is added during the phase with probability $C/2$. Following the analysis similar to Build MST, after $O(n \log n)$ more phases, the ST tree will be formed w.h.p.

\end{document}